\begin{document}
\newtheorem{thm}{Theorem}
\newtheorem{cor}[thm]{Corollary}
\newtheorem{conj}[thm]{Conjecture}
\newtheorem{lemma}[thm]{Lemma}
\newtheorem{prop}[thm]{Proposition}
\newtheorem{problem}[thm]{Problem}
\newtheorem{remark}[thm]{Remark}
\newtheorem{defn}[thm]{Definition}
\newtheorem{ex}[thm]{Example}

\newcommand{\mD}{{\mathbb D}}
\newcommand{\mE}{{\mathbb E}}
\newcommand{\mR}{{\mathbb R}}
\newcommand{\cN}{{\mathcal N}}
\newcommand{\cR}{{\mathcal R}}
\newcommand{\cU}{{\mathcal U}}
\newcommand{\HH}{{\mathrm H}}
\newcommand{\diag}{\operatorname{diag}}
\newcommand{\trace}{\operatorname{trace}}
\newcommand{\ignore}[1]{}
\newcommand{\mike}{\color{magenta}}
\newcommand{\rike}{\color{red}}
\newcommand{\bike}{\color{blue}}
\newcommand{\zero}{{0}}

\def\spacingset#1{\def\baselinestretch{#1}\small\normalsize}
\setlength{\parskip}{10pt}
\setlength{\parindent}{20pt}
\spacingset{1}

\definecolor{grey}{rgb}{0.6,0.3,0.3}
\definecolor{lgrey}{rgb}{0.9,.7,0.7}

\title{Optimal steering of a linear stochastic system\\
to a final probability distribution, Part III\thanks{
Supported in part by the
NSF under Grants ECCS-1509387,
the AFOSR under Grants FA9550-12-1-0319 and FA9550-15-1-0045, the Vincentine Hermes-Luh Chair, and by by the University of Padova Research Project CPDA 140897.}}

\author{Yongxin Chen\thanks{Y.\ Chen is with the Department of Mechanical Engineering,
University of Minnesota, Minneapolis, MN 55455; {\sf\footnotesize chen2468@umn.edu}}, Tryphon Georgiou\thanks{T.T.\ Georgiou is with the Department of Mechanical and Aerospace Engineering,
University of California, Irvine, CA 92697; {\sf\footnotesize tryphon@uci.edu}} and Michele Pavon
\thanks{M.\ Pavon is with the Dipartimento di Matematica,
Universit\`a di Padova, via Trieste 63, 35121 Padova, Italy; {\sf\footnotesize pavon@math.unipd.it}}
}
\markboth{}{}\date{}

\maketitle
\begin{abstract}
The subject of this work has its roots in the so called Schr\"odginer Bridge Problem (SBP) which asks for the most likely distribution of Brownian particles in their passage between observed empirical marginal distributions at two distinct points in time. Renewed interest in this problem was sparked by a reformulation in the language of stochastic control. In earlier works, presented as Part I and Part II, we explored a generalization of the original SBP that amounts to optimal steering of linear stochastic dynamical systems between state-distributions, at two points in time, under full state feedback. In these works the cost was quadratic in the control input. The purpose of the present work is to detail the technical steps in extending the framework to the case where a quadratic cost in the state is also present. 
In the zero-noise limit, we obtain the solution of a (deterministic) mass transport problem with general quadratic cost.
\end{abstract}

\section{Introduction}
In 1931/32, Erwin Schr\"odinger asked for the most likely evolution that a cloud of Brownian particles may have taken in between two end-point empirical marginal distributions \cite{Sch31,Sch32}. Schr\"odinger's insight was that the one-time marginal distributions along the most likely evolution can be represented as a product of two factors, a harmonic and a co-harmonic function, in close resemblance to the way the product of a quantum mechanical wave function and its adjoint produce the correct probability density. 
The 80+ year history of this so called Schr\"odinger Bridge Problem (SBP) was punctuated by advances relating SBP with large deviations theory and the Hamilton-Jacobi-Belman formalism of stochastic optimal control. More precisely, in it is original formulation, SBP seeks a probability law on path space which is closest to the prior in the sense of large deviations, i.e., closest in the relative entropy sense. Alternatively, the Girsanov transformation allows seeing this Bayesian-like estimation problem as a control problem, namely, as the problem to steer a collection of dynamical systems from an initial distribution to a final one with minimal expected quadratic input cost. The solution to the control problem generates the process and the law sought in Schr\"odinger's question.

Historically, building on the work of Jamison, Fleming, Holland, Mitter and others, Dai Pra made the connection between SBP and stochastic control \cite{Dai91}.  At about the same time, Blaquiere and others \cite{blaquiere1992controllability,DaiPav90,PavWak91,FilHonStr08}  studied the control of the Focker-Planck equation, and more recently Brockett studied the Louiville equation \cite{brockett2012notes}. The rationale for seeking to steer a stochastic or, even a deterministic system between marginal state-distributions has most eloquently been explained by Brockett, in that ``important limitations standing in the way of the wider use of optimal control [that] can be circumvented by explicitly acknowledging that in most situations the apparatus implementing the control policy will be judged on its ability to cope with a distribution of initial states, rather than a single state.''
Thus, the problem that comes into focus in this line of current research is to impose a ``soft conditioning''  in the sense that a specification for the probability distribution of the state vector is prescribed instead of initial or terminal state values. For the case of linear dynamics and quadratic input cost, the development parallels that of classical LQG regulator theory \cite{FleRis75}. More specifically, in \cite{CheGeoPav14a} the solution for quadratic input cost is provided and related to the solution of two nonlinearly-coupled homogeneous Riccati equations. The case where noise and control channels differ calls for a substantially different analysis which is given in \cite{CheGeoPav14b}. However, both \cite{CheGeoPav14a,CheGeoPav14b} do not consider penalty on state trajectories. This was discussed in \cite{HalWen16} where, rather than having a hard constraint as in the SBP on the final marginal, the authors introduce a Wasserstein distance terminal cost.  They derive necessary condition for optimality for this problem but without establishing sufficiency. Stochastic control with quadratic state-cost penalty can be given a probabilistic interpretation when the uncontrolled evolution is the law of dynamical particles/systems with creation/killing in the sense of Feynman-Kac \cite{Wak90,DaiPav90}. This was discussed in \cite{CheGeoPav14c} and necessary conditions for optimality were given there too but without establishing sufficiency. In the present work, we document fully the solution of the stochastic control problem to steer a linear system between end-point Gaussian state-distributions while minimizing a quadratic state + input cost. The solution is given in closed form by solving two matrix Riccati equations with nonlinearly coupled boundary conditions.

The paper is organized as follows. We present the problem formulation and the main results in Section \ref{sec:mainresults}. The results are used to solve the optimal mass transport problem with losses in Section \ref{sec:OMT} by taking the zero-noise limit. A numerical example is presented in Section \ref{sec:example} to highlight the results. 

\section{Main results}\label{sec:mainresults}
We consider the following optimal control problem\footnote{The choice of the time interval $[0,1]$ is without loss of generality, as the general case reduces to this by rescaling time.}
    \begin{subequations}\label{eq:problem}
    \begin{eqnarray}
    \label{eq:problem1}
    && \inf_{u\in \cU} \mE\left\{\int_0^1[\|u(t)\|^2+x(t)'Q(t)x(t)] dt\right\},
    \\&& dx(t)=A(t)x(t)dt+B(t)u(t)dt+B(t)dw(t),
    \\&& x(0)~\sim ~ \rho_0,~~~~~x(1)~\sim~ \rho_1,
    \end{eqnarray}
    \end{subequations}
where $\cU$ denotes the set of finite-energy control laws adapted to the state and $\rho_0,~\rho_1$ are zero-mean
Gaussian distributions with covariances $\Sigma_0$ and $\Sigma_1$. The optimal control for nonzero-mean cases can be obtained by introducing a suitable time-varying drift, cf.\ \cite[Remark 9]{CheGeoPav14a}. The system is assumed to be uniformly controllable in the sense that the reachability Gramian
    \[
        M(t,s)=\int_s^t \Psi(t,\tau)B(\tau)B(\tau)'\Psi(t,\tau)'d\tau
    \]
is nonsingular for all $s<t$. Here $\Psi(\cdot,\cdot)$ is the state transition matrix for $A(\cdot)$. 

Sufficient conditions for optimality were given in \cite[Proposition 1 and Section III]{CheGeoPav14c} in the form of the following two Riccati equations with coupled boundary conditions
    \begin{subequations}\label{eq:LQschrodinger}
    \begin{eqnarray}\label{eq:LQschrodinger1}
    \hspace{-0.5cm}-\dot\Pi(t)\!\!\!\!&=&\!\!\!\!A(t)'\Pi(t)\!+\!\Pi(t)A(t)\!-\!\Pi(t)B(t)B(t)'\Pi(t) \!+\!Q(t),
    \\\label{eq:LQschrodinger2}
    \hspace{-0.5cm}-\dot\HH(t)\!\!\!\!&=&\!\!\!\!A(t)'\HH(t)\!+\!\HH(t)A(t)\!+\!\HH(t)B(t)B(t)'\HH(t) \!-\!Q(t),
    \\\label{eq:LQschrodinger3}
    \Sigma_0^{-1}\!\!\!\!&=&\!\!\!\!\Pi(0)+\HH(0),
    \\\label{eq:LQschrodinger4}
    \Sigma_1^{-1}\!\!\!\!&=&\!\!\!\!\Pi(1)+\HH(1).
    \end{eqnarray}
    \end{subequations}
The special case where $Q(\cdot)\equiv 0$, i.e., the state penalty is zero, is given in  \cite{CheGeoPav14a} where a solution is given in closed form. A key contribution below is to show that the system (\ref{eq:LQschrodinger1}-\ref{eq:LQschrodinger4})
has always a solution. Thereby, under the stated conditions, an optimal control strategy always exists and turns out to be in the form of state feedback
    \begin{equation}\label{eq:optimalcontrol}
        u(t,x)=-B(t)'\Pi(t)x.
    \end{equation}
\begin{thm}\label{thm:uniquesolution}
Consider positive definite matrices $\Sigma_0, \Sigma_1$ and a pair $(A(\cdot),B(\cdot))$ that is uniformly controllable. The coupled system of Riccati equations (\ref{eq:LQschrodinger1}-\ref{eq:LQschrodinger4}) has a unique solution, which is determined by the
initial value problem consisting of (\ref{eq:LQschrodinger1}-\ref{eq:LQschrodinger2}) and
    \begin{subequations}\label{eq:initial}
\begin{eqnarray}
    \Pi(0)&=& \frac{\Sigma_0^{-1}}{2}-\Phi_{12}^{-1}\Phi_{11}-\Sigma_0^{-1/2}
    \\&&\times\left(\frac{I}{4}+\Sigma_0^{1/2}\Phi_{12}^{-1}\Sigma_1
    (\Phi_{12}^\prime)^{-1}\Sigma_0^{1/2}\right)^{1/2}\Sigma_0^{-1/2},
    \\
    \HH(0)&=& \Sigma_0^{-1}-\Pi(0),
    \end{eqnarray}
    \end{subequations}
where
    \begin{equation}\label{eq:statetransition}
        \Phi(t,s)=\left[
        \begin{matrix}
        \Phi_{11}(t,s) & \Phi_{12}(t,s)\\
        \Phi_{21}(t,s) & \Phi_{22}(t,s)
        \end{matrix}\right]
    \end{equation}
is a state transition matrix corresponding to $\partial \Phi(t,s)/\partial t = M(t)\Phi(t,s)$ with $\Phi(s,s)=I$ and
    \[
       M(t)= \left[
        \begin{matrix}A(t) & -B(t)B(t)'\\-Q(t)& -A(t)'\end{matrix}
        \right],
    \]
and where
    \[
        \left[
        \begin{matrix}
        \Phi_{11}& \Phi_{12}\\
        \Phi_{21}& \Phi_{22}
        \end{matrix}\right]
        :=\left[
        \begin{matrix}
        \Phi_{11}(1,0) & \Phi_{12}(1,0)\\
        \Phi_{21}(1,0) & \Phi_{22}(1,0)
        \end{matrix}\right].
    \]
\end{thm}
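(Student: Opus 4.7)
The plan is to linearize the coupled Riccati system via the classical substitutions $\Pi(t)=Y(t)X(t)^{-1}$ and $\HH(t)=-\tilde Y(t)\tilde X(t)^{-1}$, where both pairs $(X,Y)$ and $(\tilde X,\tilde Y)$ evolve under the same linear Hamiltonian flow $\frac{d}{dt}(\xi,\eta)'=M(t)(\xi,\eta)'$. A short computation confirms that (\ref{eq:LQschrodinger1}) and (\ref{eq:LQschrodinger2}) are reproduced under these substitutions wherever $X,\tilde X$ are invertible. Normalizing $X(0)=\tilde X(0)=I$ and using (\ref{eq:LQschrodinger3}) gives $Y(0)=P:=\Pi(0)$ and $\tilde Y(0)=P-\Sigma_0^{-1}$, and transport by $\Phi(1,0)$ then exhibits $\Pi(1)$ and $\HH(1)$ as explicit rational functions of $P$.

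The next step reduces the terminal constraint (\ref{eq:LQschrodinger4}) to an algebraic equation in $P$. A key preliminary is that $\Phi_{12}(1,0)$ is invertible: if $v\in\ker\Phi_{12}(1,0)$, the Hamiltonian trajectory with initial data $(0,v)'$ satisfies $\xi(1)=0$, and the symplectic identity $\frac{d}{dt}(\xi'\eta)=-\|B'\eta\|^2-\xi'Q\xi$ (with $Q\ge 0$, implicit in the problem formulation) forces $B(t)'\eta(t)\equiv 0$; uniform controllability of $(A,B)$ then yields $v=0$. Exploiting the symplectic identities $\Phi_{12}^{-1}\Phi_{11}=(\Phi_{12}^{-1}\Phi_{11})'$, $\Phi_{22}\Phi_{12}^{-1}=(\Phi_{22}\Phi_{12}^{-1})'$, and $\Phi_{11}'\Phi_{22}-\Phi_{21}'\Phi_{12}=I$, the constraint collapses after routine matrix manipulations to
\[
R\,\Sigma_0\,R=\tfrac{1}{4}\Sigma_0^{-1}+\Phi_{12}^{-1}\Sigma_1(\Phi_{12}')^{-1},\qquad R:=P-\tfrac{1}{2}\Sigma_0^{-1}+\Phi_{12}^{-1}\Phi_{11}.
\]
Conjugation by $\Sigma_0^{1/2}$ turns this into the matrix square-root equation $\tilde R^2=\tfrac{I}{4}+\Sigma_0^{1/2}\Phi_{12}^{-1}\Sigma_1(\Phi_{12}')^{-1}\Sigma_0^{1/2}$ whose right-hand side is positive definite, admitting exactly two symmetric-definite solutions $\tilde R=\pm(\cdot)^{1/2}$; the formula (\ref{eq:initial}) corresponds to the negative root.

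The heart of the proof---and the main obstacle---is to show that only the negative root produces $(\Pi,\HH)$ defined on the entire interval $[0,1]$. To that end I would introduce $\Sigma(t):=X(t)\Sigma_0\tilde X(t)'$, which is symmetric throughout $[0,1]$ thanks to the identity $\tilde X(t)=X(t)-\Phi_{12}(t,0)\Sigma_0^{-1}$ (obtained by subtracting the two Hamiltonian initial conditions) together with the symplectic relation $\Phi_{11}\Phi_{12}'=\Phi_{12}\Phi_{11}'$. The algebraic equation already forces $\Sigma(0)=\Sigma_0\succ 0$ and $\Sigma(1)=\Sigma_1\succ 0$ for either sign choice, and at any $t$ where $X,\tilde X$ are both invertible one has $\Sigma(t)=(\Pi(t)+\HH(t))^{-1}$ solving the closed-loop Lyapunov equation $\dot\Sigma=(A-BB'\Pi)\Sigma+\Sigma(A-BB'\Pi)'+BB'$, whose positive-definiteness propagates from $\Sigma_0$. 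Singling out the negative branch then reduces to showing that $P_+\succeq P_-$ places the positive choice above the maximal admissible initial value for the forward integration of (\ref{eq:LQschrodinger1}) on $[0,1]$, producing a finite escape time in $(0,1)$, whereas $P_-$ stays in the admissible window; I would establish this via the comparison principle for matrix Riccati equations. Uniqueness of $(\Pi(\cdot),\HH(\cdot))$ is then immediate, since the admissible root of the algebraic equation is unique and the rest of the solution is determined by the linear Hamiltonian initial-value problem.
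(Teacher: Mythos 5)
Your first half is sound and essentially retraces the paper: the linearization $\Pi=YX^{-1}$, $\HH=-\tilde Y\tilde X^{-1}$ with $X(0)=\tilde X(0)=I$, the reduction of the two boundary conditions to a quadratic matrix equation in the unknown initial datum, and the identification of the theorem's formula with the negative root are all correct. In fact your route through the conserved pairing $\tilde X'Y-\tilde Y'X\equiv\Sigma_0^{-1}$, giving $R\,\Sigma_0\,R=\tfrac14\Sigma_0^{-1}+\Phi_{12}^{-1}\Sigma_1(\Phi_{12}')^{-1}$ directly, is cleaner than the paper's completion of squares, and it dispenses with the paper's Lemma~\ref{lem:matrixequation}. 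The genuine gap is in what you call the heart of the proof: showing that the negative root yields $X(t),\tilde X(t)$ invertible on all of $[0,1]$ while the positive root produces a finite escape time. Your Lyapunov argument is circular for this purpose: $\Sigma(t)=(\Pi(t)+\HH(t))^{-1}$ and the propagation of positivity presuppose that $\Pi,\HH$ exist up to time $t$, and positivity of $\Sigma(t)=X(t)\Sigma_0\tilde X(t)'$ can degenerate exactly at the first singularity of $X$, so it cannot by itself exclude escape. The appeal to ``the comparison principle'' is a plan, not an argument: you never identify the ``maximal admissible initial value'' or the ``admissible window,'' nor a globally existing solution above $P_-$ to compare with, nor a mechanism forcing escape before $t=1$ when starting from $P_+$. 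In the paper this is precisely where the work is done: Lemma~\ref{lem:transition} establishes invertibility of $\Phi_{11}(t,s),\Phi_{12}(t,s)$ for all $s<t$ and monotonicity of $t\mapsto\Phi_{12}(t,0)^{-1}\Phi_{11}(t,0)$ with the stated limit as $t\searrow0$; then $\Phi_{12}(t,0)^{-1}X(t)=\Phi_{12}(t,0)^{-1}\Phi_{11}(t,0)+\tfrac12\Sigma_0^{-1}+Z$ stays negative definite on $(0,1]$ for $Z_-$ (no escape), whereas for $Z_+$ it is negative definite near $t=0$ and positive definite at $t=1$, so $X$ must become singular in between. Nothing equivalent appears in your proposal, and without it neither existence nor uniqueness is established.

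A second, scope-related problem: your invertibility proof for $\Phi_{12}(1,0)$ assumes $Q\succeq0$, which you call ``implicit in the problem formulation.'' It is not: neither the theorem nor the paper restricts the sign of $Q$ (the paper's example section even takes $Q=-5I$), and the paper instead derives invertibility of $\Phi_{11}(t,s)$ and $\Phi_{12}(t,s)$ for all $s<t$ from uniform controllability via the Riccati-type flow of $T(t,s)=\Phi_{11}(t,s)^{-1}\Phi_{12}(t,s)$. Moreover you only treat $\Phi_{12}(1,0)$, while any completion of the branch-selection step needs quantitative information about $\Phi_{11}(t,0),\Phi_{12}(t,0)$ at interior times, which is exactly what Lemma~\ref{lem:transition} supplies. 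Finally, a minor caution shared with the paper: the squared equation $\tilde R^2=\tfrac I4+\Sigma_0^{1/2}\Phi_{12}^{-1}\Sigma_1(\Phi_{12}')^{-1}\Sigma_0^{1/2}$ has, besides the two definite roots, indefinite symmetric square roots whenever the right-hand side has more than one eigenvalue; a complete uniqueness argument should say why these do not produce additional solutions of the boundary-value problem.
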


We continue with
two technical lemmas needed in the proof of the theorem.
\begin{lemma}\label{lem:matrixequation}
Given positive definite matrices $X, Y$,
    \begin{eqnarray}
    \nonumber
    &&\hspace{-0.9cm}Y^{1/2}(Y^{-1/2}X^{-1}Y^{-1/2}\!+\!\frac{1}{4}Y^{-1/2}X^{-1}Y^{-1}X^{-1}Y^{-1/2})^{1/2}Y^{1/2}
    \\&&=
    X^{-1/2}(\frac{I}{4}+X^{1/2}YX^{1/2})^{1/2}X^{-1/2}.\label{eq:matrixidentity}
    \end{eqnarray}
\end{lemma}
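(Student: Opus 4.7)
The plan is to recognize the left-hand side as a matrix geometric mean and exploit its congruence invariance to reduce the identity to a one-line commutative calculation. Denote the two sides by $L$ and $R$; both are symmetric and positive definite, so it suffices to verify $L=R$. Setting $B := X^{-1} + \tfrac{1}{4}X^{-1}Y^{-1}X^{-1}$, a direct expansion shows that the bracketed matrix inside the outer $(\cdot)^{1/2}$ on the LHS is exactly $Y^{-1/2}BY^{-1/2}$; consequently
\[
L \;=\; Y^{1/2}(Y^{-1/2}BY^{-1/2})^{1/2}Y^{1/2} \;=\; Y \# B,
\]
the standard geometric mean of the two positive definite matrices $Y$ and $B$.

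I would next invoke the congruence invariance of $\#$: for any invertible $S$ and positive definite $P,Q$, $(SPS')\#(SQS') = S(P\#Q)S'$. Applied with $S = X^{-1/2}$, and using the rewritings $Y = X^{-1/2}ZX^{-1/2}$ and $B = X^{-1/2}(I+\tfrac{1}{4}Z^{-1})X^{-1/2}$, where $Z := X^{1/2}YX^{1/2}$, this yields
\[
L \;=\; Y\#B \;=\; X^{-1/2}\bigl[\,Z\#(I+\tfrac{1}{4}Z^{-1})\,\bigr]X^{-1/2}.
\]

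Finally, $Z$ and $I+\tfrac{1}{4}Z^{-1}$ are both polynomials in $Z$ and therefore commute; for commuting positive operators the geometric mean collapses to the principal square root of the product, so
\[
Z\#(I+\tfrac{1}{4}Z^{-1}) \;=\; \bigl(Z+\tfrac{1}{4}I\bigr)^{1/2},
\]
and substituting back gives $L = X^{-1/2}(\tfrac{I}{4}+X^{1/2}YX^{1/2})^{1/2}X^{-1/2} = R$, as required. The only real subtlety is invoking the two standard properties of the geometric mean (congruence invariance and commutative reduction); if one prefers a self-contained derivation these can be bypassed by using the polar decomposition $X^{1/2}Y^{1/2} = Z^{1/2}Q$ with $Q$ orthogonal, which reduces $X^{1/2}LX^{1/2}$ directly to $(\tfrac{I}{4}+Z)^{1/2}$ via the same commutation argument.
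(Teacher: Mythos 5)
Your proof is correct, but it takes a genuinely different route from the paper. You identify the left-hand side as the matrix geometric mean $Y\#B$ with $B=X^{-1}+\tfrac{1}{4}X^{-1}Y^{-1}X^{-1}$, then invoke congruence invariance of $\#$ under $S=X^{-1/2}$ and the collapse of the geometric mean to $(Z(I+\tfrac14 Z^{-1}))^{1/2}=(Z+\tfrac14 I)^{1/2}$ for the commuting pair $Z$, $I+\tfrac14 Z^{-1}$ (where $Z=X^{1/2}YX^{1/2}$); all the intermediate identifications ($Y^{-1/2}BY^{-1/2}$ inside the root, $Y=X^{-1/2}ZX^{-1/2}$, $B=X^{-1/2}(I+\tfrac14 Z^{-1})X^{-1/2}$) check out, so the argument is sound, modulo citing the two standard properties of the operator geometric mean. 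The paper instead stays elementary and self-contained: it conjugates both sides by $X^{1/2}$, sets $G=X^{1/2}Y^{1/2}$ so the claim becomes $G\bigl((G'G)^{-1}+\tfrac14 (G'G)^{-2}\bigr)^{1/2}G'=(\tfrac{I}{4}+GG')^{1/2}$, and verifies it by squaring both (positive definite) sides, using only that $G'G$ commutes with functions of itself and $G(G'G)^{-1}G'=I$. Your approach buys conceptual clarity — it reveals the identity as a disguised geometric-mean computation and is shorter once the mean's properties are granted — while the paper's buys self-containedness, needing nothing beyond invertibility and a squaring argument; your suggested fallback via the polar decomposition $X^{1/2}Y^{1/2}=Z^{1/2}Q$ is essentially a self-contained variant in the same spirit as the paper's computation (replacing the squaring step by explicit orthogonal cancellation), so either version would serve. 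One cosmetic nit: $I+\tfrac14 Z^{-1}$ is a rational function of $Z$ rather than a polynomial, but commutation with $Z$ is of course immediate.
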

\begin{proof}
Multiplying both sides of \eqref{eq:matrixidentity} by $X^{1/2}$ from both left and right we obtain
    \[
        G((G'G)^{-1}+\frac{1}{4}(G'G)^{-2})^{1/2}G'=(\frac{I}{4}+GG')^{1/2},
    \]
where $G$ denotes $X^{1/2}Y^{1/2}$. As both sides are positive definite, the above is equivalent to
    \begin{eqnarray*}
        &&\hspace{-0.9cm}G((G'G)^{-1}+\frac{1}{4}(G'G)^{-2})^{1/2}G'G((G'G)^{-1}+\frac{1}{4}(G'G)^{-2})^{1/2}G'
        \\&&=\frac{I}{4}+GG',
    \end{eqnarray*}
by taking the square of both sides. Since $G'G$ commutes with $((G'G)^{-1}+\frac{1}{4}(G'G)^{-2})^{1/2}$, the LHS of the above is equal to
    \[
        GG'G((G'G)^{-1}+\frac{1}{4}(G'G)^{-2})G'=GG'+\frac{I}{4},
    \]
which completes the proof.
\end{proof}
\begin{lemma}\label{lem:transition}
The entries of the state transition matrix in \eqref{eq:statetransition} satisfy:
    \begin{subequations}\label{eq:tranmatrix}
    \begin{eqnarray}
    \label{eq:tranmatrix1}
    \Phi_{11}(t,s)'\Phi_{22}(t,s)-\Phi_{21}(t,s)'\Phi_{12}(t,s) &=& I,
    \\\label{eq:tranmatrix2}
    \Phi_{12}(t,s)'\Phi_{22}(t,s)-\Phi_{22}(t,s)'\Phi_{12}(t,s) &=& 0,
    \\\label{eq:tranmatrix3}
    \Phi_{21}(t,s)'\Phi_{11}(t,s)-\Phi_{11}(t,s)'\Phi_{21}(t,s) &=& 0,
    \\\label{eq:tranmatrix4}
    \Phi_{11}(t,s)\Phi_{22}(t,s)'-\Phi_{12}(t,s)\Phi_{21}(t,s)' &=& I,
    \\\label{eq:tranmatrix5}
    \Phi_{12}(t,s)\Phi_{11}(t,s)'-\Phi_{11}(t,s)\Phi_{12}(t,s)' &=& 0,
    \\\label{eq:tranmatrix6}
    \Phi_{21}(t,s)\Phi_{22}(t,s)'-\Phi_{22}(t,s)\Phi_{21}(t,s)' &=& 0,
    \end{eqnarray}
    \end{subequations}
for all $s\le t$.
Moreover, both $\Phi_{12}(t,s)$ and $\Phi_{11}(t,s)$ are invertible for all $s<t$, and $(\Phi_{12}(t,0)^{-1}\Phi_{11}(t,0))^{-1}$ is monotonically decreasing function in the positive definite sense with left limit $0$ as $t\searrow 0$.

\end{lemma}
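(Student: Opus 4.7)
I would split the lemma into three stages: the six symplectic identities, invertibility of $\Phi_{11}(t,s)$ and $\Phi_{12}(t,s)$ for $s<t$, and finally the monotonicity and zero-limit of $N(t)^{-1}:=\bigl(\Phi_{12}(t,0)^{-1}\Phi_{11}(t,0)\bigr)^{-1}$.

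For the identities, set $J:=\left[\begin{matrix} 0 & I\\ -I & 0\end{matrix}\right]$. A direct block computation shows $M(t)'J+JM(t)=0$, so $\tfrac{d}{dt}\bigl(\Phi(t,s)'J\Phi(t,s)\bigr)=\Phi'(M'J+JM)\Phi=0$; combined with $\Phi(s,s)=I$ this yields the symplectic identity $\Phi(t,s)'J\Phi(t,s)=J$. Reading off its four blocks produces exactly \eqref{eq:tranmatrix1}--\eqref{eq:tranmatrix3}, and, since $\Phi$ is invertible, $\Phi J\Phi'=J$ also holds and its block expansion delivers \eqref{eq:tranmatrix4}--\eqref{eq:tranmatrix6}.

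For invertibility I would use an energy argument along Hamiltonian trajectories $\left[\begin{matrix}p(\tau)\\ q(\tau)\end{matrix}\right]=\Phi(\tau,s)\left[\begin{matrix}p(s)\\ q(s)\end{matrix}\right]$: the scalar $E(\tau):=p(\tau)'q(\tau)$ satisfies $\dot E=-q'BB'q-p'Qp$, which is nonpositive under the natural assumption $Q(\cdot)\succeq 0$ (implicit in the quadratic state-cost formulation). If $\Phi_{12}(t,s)v=0$ with $v\ne 0$, choose $(p(s),q(s))=(0,v)$: then $p(t)=0$, $E(s)=E(t)=0$, and $\int_s^t(q'BB'q+p'Qp)\,d\tau=0$ forces $B(\tau)'q(\tau)\equiv 0$ on $[s,t]$. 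Substituting back gives $\dot p=Ap$ with $p(s)=0$, hence $p\equiv 0$, and $\dot q=-A'q$ then yields $q(\tau)=\Psi(s,\tau)'v$; setting $w:=(\Psi(t,s)')^{-1}v\ne 0$ and using the cocycle $\Psi(s,\tau)=\Psi(t,s)^{-1}\Psi(t,\tau)$, the condition $B(\tau)'q(\tau)\equiv 0$ becomes $w'M(t,s)w=0$, contradicting uniform controllability. The case $\Phi_{11}(t,s)v=0$ is handled analogously with $(p(s),q(s))=(v,0)$, ending with $p(t)=\Psi(t,s)v\ne 0$.

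For the monotonicity and limit, identity \eqref{eq:tranmatrix5} is precisely the symmetry of $N^{-1}=\Phi_{11}^{-1}\Phi_{12}$, and continuity of $\Phi(\cdot,0)$ yields $N(t)^{-1}\to\Phi_{11}(0,0)^{-1}\Phi_{12}(0,0)=0$ as $t\searrow 0$. Differentiating $N=\Phi_{12}^{-1}\Phi_{11}$ and using $\dot\Phi_{11}=A\Phi_{11}-BB'\Phi_{21}$, $\dot\Phi_{12}=A\Phi_{12}-BB'\Phi_{22}$ makes the $A$-terms cancel, leaving $\dot N=\Phi_{12}^{-1}BB'(\Phi_{22}\Phi_{12}^{-1}-\Phi_{21}\Phi_{11}^{-1})\Phi_{11}$. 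Solving \eqref{eq:tranmatrix1} for $\Phi_{22}$ yields $\Phi_{22}\Phi_{12}^{-1}=(\Phi_{11}')^{-1}\Phi_{12}^{-1}+(\Phi_{11}')^{-1}\Phi_{21}'$, and \eqref{eq:tranmatrix3} identifies $(\Phi_{11}')^{-1}\Phi_{21}'=\Phi_{21}\Phi_{11}^{-1}$; the bracket therefore collapses to $(\Phi_{11}')^{-1}\Phi_{12}^{-1}$. Applying $\dot{N^{-1}}=-N^{-1}\dot N N^{-1}$ then produces the clean form $\dot{N^{-1}}=-\Phi_{11}^{-1}BB'(\Phi_{11}')^{-1}\preceq 0$, proving monotonic decrease of $N(t)^{-1}$ in the positive-definite sense. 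The hardest step is precisely this algebraic reduction: one must spot the $A$-cancellation in $\dot N$ and then use \eqref{eq:tranmatrix1} and \eqref{eq:tranmatrix3} to collapse the bracket so that the derivative takes the manifestly semi-negative form above.
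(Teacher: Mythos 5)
Your first and third stages are correct and essentially reproduce the paper's own argument. The six identities are obtained there in the same way (the maps $t\mapsto\Phi(t,s)'J\Phi(t,s)$ and $t\mapsto\Phi(t,s)J\Phi(t,s)'$ are constant and equal to $J$), and for the monotonicity the paper differentiates $T(t,s)=\Phi_{11}(t,s)^{-1}\Phi_{12}(t,s)$ in $t$ directly, arriving at $\partial T/\partial t=-\Phi_{11}^{-1}BB'(\Phi_{11}^{-1})'\preceq 0$; your detour through $N=\Phi_{12}^{-1}\Phi_{11}$ and $\frac{d}{dt}N^{-1}=-N^{-1}\dot N N^{-1}$ lands on the identical expression, and the limit as $t\searrow 0$ is handled the same way.

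The gap is in the invertibility step. Your energy argument rests on $\dot E=-q'BB'q-p'Qp\le 0$, i.e.\ on the additional hypothesis $Q(\cdot)\succeq 0$, and that hypothesis is not ``implicit in the quadratic state-cost formulation'': Theorem \ref{thm:uniquesolution} assumes only $\Sigma_0,\Sigma_1\succ 0$ and uniform controllability, $Q$ enters \eqref{eq:problem1} with no sign restriction, and the example section even runs the construction with $Q=-5I$. As written, your proof therefore establishes the lemma only for positive semidefinite $Q$, which is strictly less than what the statement (and its use inside the proof of Theorem \ref{thm:uniquesolution}) requires; once $Q$ is indefinite, $E$ is no longer monotone and the vanishing-integral argument that forces $B'q\equiv 0$ collapses. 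The paper takes a different route that invokes no sign condition on $Q$: it differentiates $T(t,s)=\Phi_{11}(t,s)^{-1}\Phi_{12}(t,s)$ with respect to the \emph{second} argument, obtaining $\partial T/\partial s=AT+TA'+BB'-TQT$ with $T(t,t)=0$, and concludes from controllability that $T(t,s)\prec 0$ for $s<t$, which gives invertibility of both $\Phi_{11}(t,s)$ and $\Phi_{12}(t,s)$ simultaneously. Your caution about the sign of $Q$ is not unfounded --- for sufficiently negative $Q$ over a long horizon $\Phi_{12}$ can in fact become singular (scalar $A=0$, $B=1$, $Q\equiv-\pi^2$ yields $\Phi_{12}(1,0)=0$), and the paper's one-line appeal to controllability is terse on exactly this point --- but to prove the lemma as stated you would need either to adopt the paper's Riccati comparison for $T(t,\cdot)$ or to state explicitly the restriction on $Q$ under which your argument applies.
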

\begin{proof}
A direct consequence of the fact that $M(t)J+JM(t)'=0$, with $J=\left[\begin{matrix}0&I\\-I&0\end{matrix}\right]$, is that
    \begin{eqnarray}
    \nonumber
        J_1(t,s)\!\!\!&:=&\!\!\!
        \left[
        \begin{matrix}
        \Phi_{11}(t,s)' \!&\! \Phi_{21}(t,s)'\\\Phi_{12}(t,s)' \!&\!\Phi_{22}(t,s)'
        \end{matrix}
        \right]\!\!\!
        \left[
        \begin{matrix}
        0 & I \\-I &0
        \end{matrix}
        \right]\!\!\!
        \left[
        \begin{matrix}
        \Phi_{11}(t,s) \!&\! \Phi_{12}(t,s)\\\Phi_{21}(t,s)\!&\!\Phi_{22}(t,s)
        \end{matrix}
        \right] 
        \\\nonumber\\\!\!\!&\equiv&\!\!\! J.\label{eq:J1}
    \end{eqnarray}
To see this, note that $J_1(s,s)=J$ while
    \[
        \frac{\partial}{\partial t}J_1(t,s)=0.
    \]
Likewise,
    \begin{eqnarray}
    	\nonumber
        J_2(t,s)\!\!\!&=&\!\!\!
        \left[
        \begin{matrix}
        \Phi_{11}(t,s) \!&\! \Phi_{12}(t,s)\\\Phi_{21}(t,s)\!&\!\Phi_{22}(t,s)
        \end{matrix}
        \right]\!\!\!
        \left[
        \begin{matrix}
        0 & I \\-I &0
        \end{matrix}
        \right]\!\!\!
        \left[
        \begin{matrix}
        \Phi_{11}(t,s)' \!&\! \Phi_{21}(t,s)'\\\Phi_{12}(t,s)'\!&\!\Phi_{22}(t,s)'
        \end{matrix}
        \right] 
        \\\nonumber\\\!\!\!&\equiv&\!\!\! J.\label{eq:J2}
    \end{eqnarray}
    Then, \eqref{eq:J1} gives \eqref{eq:tranmatrix1}-\eqref{eq:tranmatrix3} and \eqref{eq:J2} gives
\eqref{eq:tranmatrix4}-\eqref{eq:tranmatrix6}.

We next show both $\Phi_{12}(t,s)$ and $\Phi_{11}(t,s)$ are invertible for all $s<t$. Let
    \[
        T(t,s)=\Phi_{11}(t,s)^{-1}\Phi_{12}(t,s).
    \]
Since $\Phi_{11}(s,s)=I$, by continuity $T(t,s)$ is well-defined for $|t-s|$ sufficiently small. What's more, $T(t,s)$ is symmetric by \eqref{eq:tranmatrix5}. Taking the derivative of $T$ with respect to $s$ yields
    \begin{eqnarray*}
        \frac{\partial}{\partial s}T(t,s)
        &=&A(s)T(t,s)+T(t,s)A(s)'+B(s)B(s)'\\&&-T(t,s)Q(s)T(t,s).
    \end{eqnarray*}
This together with the initial condition $T(t,t)=0$ and the assumption that $(A, B)$ is controllable lead to
    \[
        T(t,s)<0
    \]
for all $s<t$, which implies that both $\Phi_{11}(t,s)$ and $\Phi_{12}(t,s)$ are invertible for all $s<t$.

Finally, taking the derivative of $T$ with respect to $t$ we obtain
    \begin{eqnarray*}
        \frac{\partial}{\partial t}T(t,s)
        \!\!\!&=&\!\!\! -\Phi_{11}(t,s)^{-1}\frac{\partial}{\partial t}
        \Phi_{11}(t,s)\Phi_{11}(t,s)^{-1}\Phi_{12}(t,s)
        \\&&\!\!\!+\Phi_{11}(t,s)^{-1}\frac{\partial}{\partial t}\Phi_{12}(t,s)
        \\&=&\!\!\! \Phi_{11}(t,s)^{-1}B(t)B(t)'(\Phi_{21}(t,s)
        \Phi_{11}(t,s)^{-1}\Phi_{12}(t,s)
        \\&&\!\!\!-\Phi_{22}(t,s))
        \\&=&\!\!\! \Phi_{11}(t,s)^{-1}B(t)B(t)'(\Phi_{21}(t,s)
        \Phi_{12}(t,s)'(\Phi_{11}(t,s)^{-1})'
        \\&&-\Phi_{22}(t,s))
        \\&=& -\Phi_{11}(t,s)^{-1}B(t)B(t)'(\Phi_{11}(t,s)^{-1})'\le 0,
    \end{eqnarray*}
where we used \eqref{eq:tranmatrix4} and the fact
that $\Phi_{11}(t,s)^{-1}\Phi_{12}(t,s)$ is symmetric in the last two steps. Therefore, we conclude that $T(t,s)$ is continuous monotonically decreasing function of $t (>s)$ in the positive-definite sense, with left limit $T(s,s)=0$ at $t=s$.
\end{proof}

\begin{proof}[proof of Theorem \ref{thm:uniquesolution}]
The basic idea
is to recast
the Riccati equations (\ref{eq:LQschrodinger1}-\ref{eq:LQschrodinger2}) as linear differential equations in the standard manner. 
To this end, let $[X(t)',Y(t)']'$ be the solution of
    \begin{equation}\label{eq:lineardynamicsPi}
    \left[\begin{array}{c}\dot{X}\\\dot{Y}\end{array}\right]
    =
    \left[
        \begin{matrix}A(t) & -B(t)B(t)'\\-Q(t)& -A(t)'\end{matrix}
    \right]
    \left[\begin{array}{c}X\\Y\end{array}\right].
    \end{equation}
Then
    \begin{equation}\label{eq:solutionPi}
    \Pi(t)=Y(t)X(t)^{-1}
    \end{equation}
is a solution to the Riccati equation \eqref{eq:LQschrodinger1}
provided that $X(t)$ is invertible for all $t$. To see this, differentiate \eqref{eq:solutionPi} to obtain
    \begin{eqnarray*}
        -\dot{\Pi}(t)&=& -\dot{Y}(t)X(t)^{-1}+Y(t)X(t)^{-1}\dot{X}(t)X(t)^{-1}
        \\
        &=& (QX+A'Y)X^{-1}+YX^{-1}(AX-BB'Y)X^{-1}
        \\
        &=& A'YX^{-1}+YX^{-1}A-YX^{-1}BB'YX^{-1}+Q
        \\
        &=& A'\Pi(t)+\Pi(t)A-\Pi(t)BB'\Pi(t)+Q,
    \end{eqnarray*}
which coincides with \eqref{eq:LQschrodinger1}. Similarly, let
    \begin{equation}\label{eq:solutionH}
    \HH(t)=-(\hat{X}(t)')^{-1}\hat{Y}(t)'
    \end{equation}
with
    \begin{equation}\label{eq:lineardynamicsH}
    \left[\begin{array}{c}\dot{\hat{X}}\\\dot{\hat{Y}}\end{array}\right]
    =
    \left[
        \begin{matrix}A(t) & -B(t)B(t)'\\-Q(t)& -A(t)'\end{matrix}
    \right]
    \left[\begin{array}{c}\hat{X}\\\hat{Y}\end{array}\right]
    \end{equation}
is a solution to \eqref{eq:LQschrodinger2} provided that $\hat{X}(t)$ is invertible for all $t$. Plugging \eqref{eq:solutionPi} and \eqref{eq:solutionH} into the boundary conditions \eqref{eq:LQschrodinger3} and \eqref{eq:LQschrodinger4} yields
    \begin{eqnarray*}
        \Sigma_0^{-1}&=&Y(0)X(0)^{-1}-(\hat{X}(0)')^{-1}\hat{Y}(0)',
        \\
        \Sigma_1^{-1}&=&Y(1)X(1)^{-1}-(\hat{X}(1)')^{-1}\hat{Y}(1)'.
    \end{eqnarray*}
Since $[X(t)',Y(t)']'$ has linear dynamics \eqref{eq:lineardynamicsPi}, we have
    \[
    \left[\begin{array}{c}X(1)\\Y(1)\end{array}\right]
    =
    \left[
        \begin{matrix}
        \Phi_{11}& \Phi_{12}\\
        \Phi_{21}& \Phi_{22}
        \end{matrix}\right]
    \left[\begin{array}{c}X(0)\\Y(0)\end{array}\right].
    \]
Similarly,
    \[
    \left[\begin{array}{c}\hat{X}(1)\\\hat{Y}(1)\end{array}\right]
    =
    \left[
        \begin{matrix}
        \Phi_{11}& \Phi_{12}\\
        \Phi_{21}& \Phi_{22}
        \end{matrix}\right]
    \left[\begin{array}{c}\hat{X}(0)\\\hat{Y}(0)\end{array}\right].
    \]
Moreover, without loss of generality, we can assume $X(0)=\hat{X}(0)=I$ because their initial values can be absorbed into $Y(0)$ and $\hat{Y}(0)$ without changing the values of $\Pi(0)$ and $\HH(0)$. In this case, the only unknowns $Y(0),\hat{Y}(0)$ are symmetric. Combining the above we obtain
    \begin{subequations}\label{eq:boundary}
    \begin{eqnarray}\label{eq:boundary1}
        \Sigma_0^{-1}&=&Y(0)-\hat{Y}(0),
        \\\nonumber
        \Sigma_1^{-1}&=&(\Phi_{21}+\Phi_{22}Y(0))(\Phi_{11}+\Phi_{12}Y(0))^{-1}
        \\&&-
        (\Phi_{11}'+\hat{Y}(0)'\Phi_{12}')^{-1}(\Phi_{21}'+\hat{Y}(0)\Phi_{22}').\label{eq:boundary2}
    \end{eqnarray}
    \end{subequations}
Multiplying \eqref{eq:boundary2} with $(\Phi_{11}'+\hat{Y}(0)\Phi_{12}')$ from the left and $(\Phi_{11}+\Phi_{12}Y(0))$ from the right yields
    \begin{eqnarray}
        \nonumber
        &&(\Phi_{11}'+\hat{Y}(0)\Phi_{12}')\Sigma_1^{-1}(\Phi_{11}+\Phi_{12}Y(0))
        \\\nonumber
        &=& (\Phi_{11}'+\hat{Y}(0)\Phi_{12}')(\Phi_{21}+\Phi_{22}Y(0))
        \\\nonumber&&-(\Phi_{21}'+\hat{Y}(0)\Phi_{22}')(\Phi_{11}+\Phi_{12}Y(0))
        \\\nonumber
        &=& \Phi_{11}'\Phi_{21}+\Phi_{11}'\Phi_{22}Y(0)+\hat{Y}(0)\Phi_{12}'\Phi_{21}
        +\hat{Y}(0)\Phi_{12}'\Phi_{22}Y(0)
        \\\nonumber
        &&-\Phi_{21}'\Phi_{11}-\Phi_{21}'\Phi_{12}Y(0)
        -\hat{Y}(0)\Phi_{22}'\Phi_{11}-\hat{Y}(0)\Phi_{22}'\Phi_{12}Y(0)
        \\\label{eq:YYhat} &=& Y(0)-\hat{Y}(0),
    \end{eqnarray}
where we use the three identities \eqref{eq:tranmatrix1}-\eqref{eq:tranmatrix3} in the last step. By \eqref{eq:boundary1}, $Y(0)$ and $\hat{Y}(0)$ can be parameterized by a symmetric matrix $Z$ as
    \begin{subequations}\label{eq:paraZ}
    \begin{eqnarray}
        Y(0)&=& Z+\frac{1}{2}\Sigma_0^{-1},
        \\
        \hat{Y}(0) &=& Z-\frac{1}{2}\Sigma_0^{-1}.
    \end{eqnarray}
    \end{subequations}
Plugging these into \eqref{eq:YYhat} yields
     \[
        \Sigma_0^{-1}=(\Phi_{11}'-\frac{1}{2}\Sigma_0^{-1}\Phi_{12}'+Z\Phi_{12}')\Sigma_1^{-1}
        (\Phi_{11}+\frac{1}{2}\Phi_{12}\Sigma_0^{-1}+\Phi_{12}Z).
     \]
Expanding it and exploring the symmetry we obtain a quadratic equation
    \begin{eqnarray*}
        &&\hspace{-1cm}Z\Phi_{12}'\Sigma_1^{-1}\Phi_{12}Z+Z\Phi_{12}'\Sigma_1^{-1}\Phi_{11}
        +\Phi_{11}'\Sigma_1^{-1}\Phi_{12}Z+\Phi_{11}'\Sigma_1^{-1}\Phi_{11}
        \\&=&\Sigma_0^{-1}+\frac{1}{4}\Sigma_0^{-1}\Phi_{12}'\Sigma_1^{-1}\Phi_{12}\Sigma_0^{-1}
    \end{eqnarray*}
on $Z$. By completion of square the left hand side is
    \begin{eqnarray*}
       && (Z+\Phi_{11}'(\Phi_{12}')^{-1})\Phi_{12}'\Sigma_1^{-1}\Phi_{12}(Z+\Phi_{12}^{-1}\Phi_{11}).
    \end{eqnarray*}
Note here we use the fact that $\Phi_{12}$ is invertible (see Lemma \ref{lem:transition}). By \eqref{eq:tranmatrix5}, $\Phi_{12}^{-1}\Phi_{11}$ is symmetric, therefore
    \begin{eqnarray*}
        (T^{-1/2}(Z+\Phi_{12}^{-1}\Phi_{11})T^{-1/2})^2&=&\\
       &&\hspace*{-3cm} T^{-1/2}(\Sigma_0^{-1}+\frac{1}{4}\Sigma_0^{-1}T^{-1}\Sigma_0^{-1})T^{-1/2},
    \end{eqnarray*}
where $T=(\Phi_{12}'\Sigma_1^{-1}\Phi_{12})^{-1}$. It follows that the only solutions are
    \begin{eqnarray*}
        Z_{\pm}&=&-\Phi_{12}^{-1}\Phi_{11}\pm
        T^{1/2}(T^{-1/2}\Sigma_0^{-1}T^{-1/2}+\\&&
       \hspace*{15pt} \frac{1}{4}T^{-1/2}\Sigma_0^{-1}T^{-1}\Sigma_0^{-1}T^{-1/2})^{1/2}
        T^{1/2}.
    \end{eqnarray*}
Since $\Sigma_0$ and $T$ are positive definite, we can apply Lemma \ref{lem:matrixequation} and arrive at
    \[
        Z_{\pm}=-\Phi_{12}^{-1}\Phi_{11}\pm\Sigma_0^{-1/2}(\frac{I}{4}+
        \Sigma_0^{1/2}\Phi_{12}^{-1}\Sigma_1(\Phi_{12}')^{-1}\Sigma_0^{1/2})^{1/2}\Sigma_0^{-1/2}.
    \]
The unknowns $Y(0)$ and $\hat{Y}(0)$ can be obtained by plugging the above into \eqref{eq:paraZ}.

We next show that when $Z=Z_-$, the solutions to \eqref{eq:lineardynamicsPi} and \eqref{eq:lineardynamicsH} satisfy that $X(t)$ and $\hat{X}(t)$ are invertible for all $t\in [0, 1]$, while this is not the case when $Z=Z_+$. This implies that when $Z=Z_-$, the pair $(\Pi(\cdot), \HH(\cdot))$ in \eqref{eq:solutionPi} and \eqref{eq:solutionH} is well defined and solves the coupled Riccati equations \eqref{eq:LQschrodinger},  whereas, $\Pi(\cdot)$ or $\HH(\cdot)$ would have finite escape time when $Z=Z_+$.

By \eqref{eq:lineardynamicsPi}, recalling the initial condition $X(0)=I$,
    \begin{eqnarray*}
        X(t)&=&\Phi_{11}(t,0)+\Phi_{12}(t,0)Y(0)\\&=&\Phi_{11}(t,0)+\Phi_{12}(t,0)(\frac{1}{2}
        \Sigma_0^{-1}+Z).
    \end{eqnarray*}
Since $\Phi_{12}(t,0)$ is nonsingular for all $t\in (0,1]$, it follows
    \[
        \Phi_{12}(t,0)^{-1}X(t)=\Phi_{12}(t,0)^{-1}\Phi_{11}(t,0)+\frac{1}{2}\Sigma_0^{-1}+Z.
    \]
First, when $Z=Z_-$, we have
    \begin{eqnarray*}
        \Phi_{12}(t,0)^{-1}X(t)\!\!\!\!&=&\!\!\!\!\Phi_{12}(t,0)^{-1}\Phi_{11}(t,0)-\Phi_{12}^{-1}\Phi_{11}
        +\frac{1}{2}\Sigma_0^{-1}
        \\&&\hspace{-1.2cm}-\Sigma_0^{-1/2}(\frac{I}{4}\!+\!
        \Sigma_0^{1/2}\Phi_{12}^{-1}\Sigma_1(\Phi_{12}')^{-1}\Sigma_0^{1/2})^{1/2}\Sigma_0^{-1/2}.
    \end{eqnarray*}
By Lemma \ref{lem:transition},
    \[
        \Phi_{12}(t,0)^{-1}\Phi_{11}(t,0)\le\Phi_{12}(1,0)^{-1}\Phi_{11}(1,0)=\Phi_{12}^{-1}\Phi_{11},
    \]
therefore, for any $t\in (0,1]$,
    \begin{eqnarray*}
        \Phi_{12}(t,0)^{-1}X(t)&\le&
        \frac{1}{2}\Sigma_0^{-1}-\Sigma_0^{-1/2}(\frac{I}{4}
        +\\&&\hspace*{-15pt}
        \Sigma_0^{1/2}\Phi_{12}^{-1}\Sigma_1(\Phi_{12}')^{-1}\Sigma_0^{1/2})^{1/2}\Sigma_0^{-1/2}
        <0
    \end{eqnarray*}
is invertible. This indicates $X(t)$ is for all $t\in [0, 1]$. On the other hand, when $Z=Z_+$,
    \begin{eqnarray*}
        \Phi_{12}(t,0)^{-1}X(t)\!\!\!\!&=&\!\!\!\!\Phi_{12}(t,0)^{-1}\Phi_{11}(t,0)-\Phi_{12}^{-1}\Phi_{11}
        +\frac{1}{2}\Sigma_0^{-1}+
        \\&&\hspace{-0.8cm}\Sigma_0^{-1/2}(\frac{I}{4}+
        \Sigma_0^{1/2}\Phi_{12}^{-1}\Sigma_1(\Phi_{12}')^{-1}\Sigma_0^{1/2})^{1/2}\Sigma_0^{-1/2}.
    \end{eqnarray*}
By Lemma \ref{lem:transition},
   $
        (\Phi_{12}(t,0)^{-1}\Phi_{11}(t,0))^{-1} \nearrow 0
    $
as $t\searrow 0$. Thus, for small enough $s>0$,
    $
        \Phi_{12}(s,0)^{-1}X(s)
    $
 is symmetric and negative definite.
But for $t=1$,
    \begin{eqnarray*}
        \Phi_{12}(1,0)^{-1}X(1)&=&
        \frac{1}{2}\Sigma_0^{-1}+\Sigma_0^{-1/2}(\frac{I}{4}
        +\\&&\hspace*{-15pt}
        \Sigma_0^{1/2}\Phi_{12}^{-1}\Sigma_1(\Phi_{12}')^{-1}\Sigma_0^{1/2})^{1/2}\Sigma_0^{-1/2}
        >0.
    \end{eqnarray*}
Hence, by continuity of $X(t)$ we conclude that there exists $\tau\in (s,1)$ such that $X(\tau)$ is singular. This implies that $\Pi(t)$  grows unbounded at $t=\tau$.
An analogous argument can be carried out for $\hat{X}$ and $\HH$.
 Finally, setting $Z=Z_-$ into \eqref{eq:paraZ} and recalling that $X(0)=\hat{X}(0)=I$ we obtain
    \begin{align*}
    \Pi(0)&= \frac{\Sigma_0^{-1}}{2}-\Phi_{12}^{-1}\Phi_{11}\\
    & \hspace*{10pt}-\Sigma_0^{-1/2}
    \left(\frac{I}{4}
    +\Sigma_0^{1/2}\Phi_{12}^{-1}\Sigma_1
    (\Phi_{12}^\prime)^{-1}\Sigma_0^{1/2}\right)^{1/2}\Sigma_0^{-1/2},
    \\
    \HH(0)&= \Sigma_0^{-1}-\Pi(0).
    \end{align*}
This completes the proof.
\end{proof}
The result for the $Q\equiv 0$ in \cite[Proposition 4, Remark 6]{CheGeoPav14a} can be recovered as a special case of the Theorem \ref{thm:uniquesolution}.
\begin{cor}
Given $\Sigma_0, \Sigma_1>0$ and controllable pair $(A(\cdot),B(\cdot))$, the Riccati equations \eqref{eq:LQschrodinger} with $Q\equiv 0$ has a unique solution, which is determined by the initial conidtions
    \begin{eqnarray*}
    \Pi(0)\!\!\!\!&=&\!\!\!\! \frac{\Sigma_0^{-1}}{2}+\Psi(1,0)'M(1,0)^{-1}\Psi(1,0)
    -\Sigma_0^{-1/2}\left(\frac{I}{4}\right.
    \\&&\hspace{-1cm}\left.+\Sigma_0^{1/2}\Psi(1,0)'M(1,0)^{-1}
    \Sigma_1 M(1,0)^{-1}\Psi(1,0)\Sigma_0^{1/2}\right)^{1/2}\Sigma_0^{-1/2},
    \\
    \HH(0)\!\!\!\!&=&\!\!\!\! \Sigma_0^{-1}-\Pi(0),
    \end{eqnarray*}
where $\Psi$ is the state transition matrix of $(A, B)$ and $M$ is the corresponding reachability Gramian.
\end{cor}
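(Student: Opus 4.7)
The plan is to specialize Theorem~\ref{thm:uniquesolution} by explicitly computing the block entries of $\Phi(1,0)$ in the degenerate case $Q\equiv 0$, then substituting into the formula for $\Pi(0)$. With $Q\equiv 0$ the Hamiltonian matrix becomes block upper triangular,
\[
M(t)=\left[\begin{matrix} A(t) & -B(t)B(t)' \\ 0 & -A(t)' \end{matrix}\right],
\]
so the linear system $\partial_t\Phi(t,s)=M(t)\Phi(t,s)$ with $\Phi(s,s)=I$ decouples in a convenient way.

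Concretely, I first note that $\Phi_{21}(t,s)$ satisfies $\dot\Phi_{21}=-A'\Phi_{21}$ with zero initial condition, so $\Phi_{21}\equiv 0$. Then $\Phi_{11}$ obeys $\dot\Phi_{11}=A\Phi_{11}$ with $\Phi_{11}(s,s)=I$, giving $\Phi_{11}(t,s)=\Psi(t,s)$. Similarly $\Phi_{22}$ satisfies $\dot\Phi_{22}=-A'\Phi_{22}$ with identity initial condition, whence $\Phi_{22}(t,s)=\Psi(s,t)'=\Psi(t,s)^{-\prime}$. The only nontrivial computation is $\Phi_{12}$: applying variation of parameters to $\dot\Phi_{12}=A\Phi_{12}-BB'\Phi_{22}$ with $\Phi_{12}(s,s)=0$ yields
\[
\Phi_{12}(t,s)=-\int_s^t \Psi(t,\tau)B(\tau)B(\tau)'\Psi(s,\tau)'\,d\tau.
\]
Using the semigroup identity $\Psi(s,\tau)'=\Psi(t,\tau)'\Psi(s,t)'$ to factor $\Psi(s,t)'$ out on the right, this simplifies to $\Phi_{12}(t,s)=-M(t,s)\Psi(s,t)'$, which in particular gives $\Phi_{12}=\Phi_{12}(1,0)=-M(1,0)\Psi(1,0)^{-\prime}$ and hence the invertibility promised by Lemma~\ref{lem:transition}.

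Finally, I substitute into the Theorem~\ref{thm:uniquesolution} formula for $\Pi(0)$. Since $\Phi_{12}^{-1}=-\Psi(1,0)'M(1,0)^{-1}$, the linear term satisfies
\[
-\Phi_{12}^{-1}\Phi_{11}=\Psi(1,0)'M(1,0)^{-1}\Psi(1,0),
\]
while the argument of the square root simplifies via
\[
\Phi_{12}^{-1}\Sigma_1(\Phi_{12}')^{-1}=\Psi(1,0)'M(1,0)^{-1}\Sigma_1 M(1,0)^{-1}\Psi(1,0).
\]
Plugging these two identities into \eqref{eq:initial} produces exactly the expression in the corollary statement, and uniqueness is inherited from Theorem~\ref{thm:uniquesolution}.

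The only real step requiring care is the computation of $\Phi_{12}(t,s)$ and its rewriting in terms of the reachability Gramian $M(t,s)$; everything else is direct substitution. No new analytic obstacle arises because the general existence/uniqueness argument of Theorem~\ref{thm:uniquesolution} already supplies both well-definedness (via the invertibility of $\Phi_{12}$, which is preserved here since $(A,B)$ is controllable and $M(1,0)\succ 0$) and the selection of the correct branch $Z=Z_-$.
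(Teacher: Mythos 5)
Your proposal is correct and follows essentially the same route as the paper: the paper's proof simply records the identities $\Phi_{11}=\Psi(1,0)$ and $\Phi_{12}=-M(1,0)(\Psi(1,0)')^{-1}$ for $Q\equiv 0$ and substitutes into Theorem~\ref{thm:uniquesolution}, which is exactly what you do, only with the block-triangular computation of $\Phi(t,s)$ and the variation-of-parameters derivation of $\Phi_{12}$ spelled out. Your substitutions $-\Phi_{12}^{-1}\Phi_{11}=\Psi(1,0)'M(1,0)^{-1}\Psi(1,0)$ and $\Phi_{12}^{-1}\Sigma_1(\Phi_{12}')^{-1}=\Psi(1,0)'M(1,0)^{-1}\Sigma_1 M(1,0)^{-1}\Psi(1,0)$ are correct and yield the stated formula.
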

\begin{proof}
Simply note that when $Q\equiv 0$ we have
   $
        \Phi_{11}=\Psi(1,0),
    $
and
   $
        \Phi_{12}=-M(1,0)(\Psi(1,0)')^{-1}.
   $
\end{proof}

\section{Zero-noise limit and OMT with losses}\label{sec:OMT}
The zero-noise limit of the optimal steering problem \ref{eq:problem} is a optimal mass transport problem with general quadratic cost. That is, the solution of
    \begin{subequations}\label{eq:optimalcontroleps}
    \begin{eqnarray}
    && \inf_{u\in \cU} \mE\left\{\int_0^1[\|u(t)\|^2+x(t)'Q(t)x(t)] dt\right\},
    \\&& dx(t)=A(t)x(t)dt+B(t)u(t)dt+\sqrt{\epsilon}B(t)dw(t),
    \\&& x(0)~\sim ~ \rho_0,~~~~~x(1)~\sim~ \rho_1,
    \end{eqnarray}
    \end{subequations}
converges  \footnote{See \cite{CheGeoPav15b} for a precise statement of this convergence which involves weak convergence of path space probability measures and of their initial-final joint marginals.} to the solution of
    \begin{subequations}\label{eq:OMT}
    \begin{eqnarray}
    && \inf_{u\in \cU} \mE\left\{\int_0^1[\|u(t)\|^2+x(t)'Q(t)x(t)] dt\right\},
    \\&& dx(t)=A(t)x(t)dt+B(t)u(t)dt,
    \\&& x(0)~\sim ~ \rho_0,~~~~~x(1)~\sim~ \rho_1,
    \end{eqnarray}
    \end{subequations}
as $\epsilon\searrow 0$. The special case when $Q\equiv 0$ has been studied in \cite{CheGeoPav15b}. See for \cite{Mik04,MikThi08,Leo12,Leo13} the proof of the general cases.

By slightly modifying the results in Section \ref{sec:mainresults}, we can readily obtain the solution to \eqref{eq:optimalcontroleps}. The optimal control strategy for \eqref{eq:optimalcontroleps} is
    \[
        u(t,x)=-B(t)'\Pi_\epsilon(t)x
    \]
with $\Pi_\epsilon(\cdot)$ satisfying the same Riccati equation \eqref{eq:LQschrodinger1} with some proper initial condition $\Pi_\epsilon(0)$. The initial value is chosen in a way such that the covariance $\Sigma_\epsilon(\cdot)$, that is, the solution to
    \begin{equation}\label{eq:covarianceLyap}
        \dot{\Sigma}_\epsilon(t)=(A-BB'\Pi_\epsilon)\Sigma_\epsilon+\Sigma_\epsilon(A-BB'\Pi_\epsilon)'+\epsilon BB'
    \end{equation}
matches the two boundary values $\Sigma_0$ and $\Sigma_1$. Combining \eqref{eq:LQschrodinger1},\eqref{eq:covarianceLyap} and letting
    \[
        H_\epsilon(t)=\epsilon\Sigma_\epsilon^{-1}(t)-\Pi_\epsilon(t)
    \]
yield
    \[
        -\dot\HH_\epsilon(t)=A(t)'\HH_\epsilon(t)+\HH_\epsilon(t)A(t)+\HH_\epsilon(t)B(t)B(t)'\HH_\epsilon(t) -Q(t).
    \]
Therefore, to establish the optimal control for \eqref{eq:optimalcontroleps}, we only need to solve the coupled Riccati equations \eqref{eq:LQschrodinger1}-\eqref{eq:LQschrodinger2} with boundary conditions
    \[
         \epsilon\Sigma_0^{-1}=\Pi_\epsilon(0)+\HH_\epsilon(0),~~~~
         \epsilon\Sigma_1^{-1}=\Pi_\epsilon(1)+\HH_\epsilon(1).
    \]
This is nothing but Theorem \ref{thm:uniquesolution} with different boundary conditions. Therefore, The initial value for $\Pi_\epsilon(t)$ is
    \begin{align*}
    \Pi_\epsilon(0)&= \frac{\epsilon\Sigma_0^{-1}}{2}-\Phi_{12}^{-1}\Phi_{11}\\
    &\hspace*{-1cm}-\Sigma_0^{-1/2}
    \left(\frac{\epsilon^2 I}{4}+\Sigma_0^{1/2}\Phi_{12}^{-1}\Sigma_1
    (\Phi_{12}^\prime)^{-1}\Sigma_0^{1/2}\right)^{1/2}\Sigma_0^{-1/2}.
    \end{align*}
Letting $\epsilon \rightarrow 0$ we obtain that the solution to the optimal mass transport problem \eqref{eq:OMT} is $$u(t,x)=-B(t)'\Pi_0(t)x$$
where $\Pi_0(\cdot)$ satisfies the Riccati equation \eqref{eq:LQschrodinger1} with initial value
    \[
        \Pi_0(0)= -\Phi_{12}^{-1}\Phi_{11}-\Sigma_0^{-1/2}
    \left(\Sigma_0^{1/2}\Phi_{12}^{-1}\Sigma_1
    (\Phi_{12}^\prime)^{-1}\Sigma_0^{1/2}\right)^{1/2}\Sigma_0^{-1/2}.
    \]
Therefore, we established the following.
\begin{thm}\label{thm:omt}
The solution to Problem \eqref{eq:OMT} with zero-mean Gaussian marginals with covariances $\Sigma_0, \Sigma_1$ is
    \[
        u(t,x)=-B(t)'\Pi(t)x,
    \]
where $\Pi$ is the solution of the
Riccati equation \eqref{eq:LQschrodinger1} with
initial value
    \[
        \Pi(0)= -\Phi_{12}^{-1}\Phi_{11}-\Sigma_0^{-1/2}
    \left(\Sigma_0^{1/2}\Phi_{12}^{-1}\Sigma_1
    (\Phi_{12}^\prime)^{-1}\Sigma_0^{1/2}\right)^{1/2}\Sigma_0^{-1/2}.
    \]
\end{thm}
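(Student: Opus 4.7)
The plan is to obtain Theorem \ref{thm:omt} as the zero-noise limit of the stochastic steering problem solved in Theorem \ref{thm:uniquesolution}, essentially formalizing the heuristic derivation already sketched in the paragraphs preceding the statement. The overall structure is: (i) solve the noisy problem \eqref{eq:optimalcontroleps} explicitly for every $\epsilon>0$ by reduction to Theorem \ref{thm:uniquesolution}; (ii) pass to the limit $\epsilon\searrow 0$ in the resulting closed-form feedback gain; (iii) invoke the cited convergence results to identify this limit with the optimal OMT feedback.

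For step (i), I would first note that the optimal feedback for \eqref{eq:optimalcontroleps} has the form $u(t,x)=-B(t)'\Pi_\epsilon(t)x$ with $\Pi_\epsilon$ a solution of \eqref{eq:LQschrodinger1}, and that the closed-loop covariance $\Sigma_\epsilon(\cdot)$ obeys the Lyapunov equation \eqref{eq:covarianceLyap} with boundary values $\Sigma_0$, $\Sigma_1$. Setting $H_\epsilon(t):=\epsilon\Sigma_\epsilon^{-1}(t)-\Pi_\epsilon(t)$, a direct differentiation combining \eqref{eq:LQschrodinger1} and \eqref{eq:covarianceLyap} shows that $H_\epsilon$ satisfies \eqref{eq:LQschrodinger2}, and the marginal matching requirements translate into $\epsilon\Sigma_0^{-1}=\Pi_\epsilon(0)+H_\epsilon(0)$ and $\epsilon\Sigma_1^{-1}=\Pi_\epsilon(1)+H_\epsilon(1)$. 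These are the coupled conditions of Theorem \ref{thm:uniquesolution} with the right-hand sides scaled by $\epsilon$; the same proof, applied verbatim with $\Sigma_0^{-1}\mapsto\epsilon\Sigma_0^{-1}$ and $\Sigma_1^{-1}\mapsto\epsilon\Sigma_1^{-1}$, yields the closed-form initial value
\[
\Pi_\epsilon(0)=\tfrac{\epsilon}{2}\Sigma_0^{-1}-\Phi_{12}^{-1}\Phi_{11}-\Sigma_0^{-1/2}\Bigl(\tfrac{\epsilon^2 I}{4}+\Sigma_0^{1/2}\Phi_{12}^{-1}\Sigma_1(\Phi_{12}')^{-1}\Sigma_0^{1/2}\Bigr)^{1/2}\Sigma_0^{-1/2}.
\]

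For step (ii), I would simply send $\epsilon\searrow 0$ in this expression, which is manifestly continuous at $\epsilon=0$, to obtain $\Pi_0(0)$ as stated in the theorem; continuous dependence of Riccati solutions on initial data (on the interval of existence) then gives $\Pi_\epsilon(\cdot)\to\Pi_0(\cdot)$ pointwise. For step (iii), the convergence of the optimal law of \eqref{eq:optimalcontroleps} to that of \eqref{eq:OMT} as $\epsilon\searrow 0$ is exactly the content of \cite{CheGeoPav15b,Mik04,MikThi08,Leo12,Leo13}; paired with the limit of the feedback gain this identifies $u(t,x)=-B(t)'\Pi_0(t)x$ as the optimal OMT control.

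The main subtlety is to ensure that $\Pi_0$ is well-defined on the entire interval $[0,1]$, i.e., has no finite escape time. This amounts to checking that the $Z=Z_-$ invertibility argument for $X(t)$ in the proof of Theorem \ref{thm:uniquesolution} survives at $\epsilon=0$. The bound $\Phi_{12}(t,0)^{-1}X(t)<0$ used there depends on the positive definiteness of $\Sigma_0^{-1/2}(\tfrac{\epsilon^2 I}{4}+\Sigma_0^{1/2}\Phi_{12}^{-1}\Sigma_1(\Phi_{12}')^{-1}\Sigma_0^{1/2})^{1/2}\Sigma_0^{-1/2}$; at $\epsilon=0$ this reduces to $\Sigma_0^{-1/2}(\Sigma_0^{1/2}\Phi_{12}^{-1}\Sigma_1(\Phi_{12}')^{-1}\Sigma_0^{1/2})^{1/2}\Sigma_0^{-1/2}$, which remains strictly positive definite because $\Sigma_1\succ 0$ and $\Phi_{12}$ is invertible by Lemma \ref{lem:transition}. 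Hence the same monotonicity argument in $t$ yields invertibility of $X(t)$ for $t\in(0,1]$ in the limit, and $\Pi_0(\cdot)$ extends across the entire interval.
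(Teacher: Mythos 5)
Your proposal is correct and follows essentially the same route as the paper: reduce the $\epsilon$-noise problem \eqref{eq:optimalcontroleps} to Theorem \ref{thm:uniquesolution} via $H_\epsilon=\epsilon\Sigma_\epsilon^{-1}-\Pi_\epsilon$ and the scaled boundary conditions, read off the closed-form $\Pi_\epsilon(0)$, let $\epsilon\searrow 0$, and invoke the cited convergence results to identify the limit with the OMT optimizer. Your added verification that the $Z=Z_-$ invertibility bound for $X(t)$ survives at $\epsilon=0$ (so $\Pi_0$ has no finite escape time on $[0,1]$) is a point the paper leaves implicit, and it is correct.
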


Evidently, we can similarly solve the slightly more general optimal mass transport problem
    \begin{subequations}\label{eq:OMTR}
    \begin{eqnarray}
    && \inf_{u\in \cU} \mE\left\{\int_0^1[u(t)'R(t)u(t)+x(t)'Q(t)x(t)] dt\right\},
    \\&& dx(t)=A(t)x(t)dt+B(t)u(t)dt,
    \\&& x(0)~\sim ~ \rho_0,~~~~~x(1)~\sim~ \rho_1,
    \end{eqnarray}
    \end{subequations}
where $R(t), 0\le t\le 1$ is positive definite,
as this reduces to \eqref{eq:OMT} by setting $\tilde u(t)=R(t)^{1/2}u(t)$ and $B_1(t)=B(t)R(t)^{-1/2}.$ 
More specifically, the solution to \eqref{eq:OMTR} with zero-mean Gaussian marginals having covariances $\Sigma_0, \Sigma_1$ is
given by
        $u(t,x)=-R(t)^{-1}B(t)'\Pi(t)x$,
where $\Pi$ is the solution of 
    \[
        -\dot\Pi(t)=A(t)'\Pi(t)+\Pi(t)A(t)-\Pi(t)B(t)R(t)^{-1}B(t)'\Pi(t) +Q(t)
    \]
with initial value
    \[
        \Pi(0)= -\Phi_{12}^{-1}\Phi_{11}-\Sigma_0^{-1/2}
    \left(\Sigma_0^{1/2}\Phi_{12}^{-1}\Sigma_1
    (\Phi_{12}^\prime)^{-1}\Sigma_0^{1/2}\right)^{1/2}\Sigma_0^{-1/2}.
    \]
Here
    \[
        \Phi(t,s)=\left[
        \begin{matrix}
        \Phi_{11}(t,s) & \Phi_{12}(t,s)\\
        \Phi_{21}(t,s) & \Phi_{22}(t,s)
        \end{matrix}\right]
    \]
is a state transition matrix corresponding to $\partial \Phi(t,s)/\partial t = M(t)\Phi(t,s)$ with $\Phi(s,s)=I$ and
    \[
       M(t)= \left[
        \begin{matrix}A(t) & -B(t)R(t)^{-1}B(t)'\\-Q(t)& -A(t)'\end{matrix}
        \right],
    \]
and, as before,
    \[
        \left[
        \begin{matrix}
        \Phi_{11}& \Phi_{12}\\
        \Phi_{21}& \Phi_{22}
        \end{matrix}\right]
        :=\left[
        \begin{matrix}
        \Phi_{11}(1,0) & \Phi_{12}(1,0)\\
        \Phi_{21}(1,0) & \Phi_{22}(1,0)
        \end{matrix}\right].
    \]

\section{Examples}\label{sec:example}
Consider inertial particles modeled by
   \begin{eqnarray*}
       dx(t) &=& v(t)dt \\
       dv(t) &=& u(t)dt+ \sqrt{\epsilon}dw(t),
   \end{eqnarray*}
where $u(t)$ is a control input (force) at our disposal, $x(t)$ represents the position, $v(t)$ velocity of particles, and $w(t)$ represents random exitation (corresponding to ``white noise'' forcing).
Our goal is to steer the spread of the particles from an initial Gaussian distribution with $\Sigma_0=2I$ at $t=0$ to the terminal marginal $\Sigma_1=1/4I$
for $t=1$ in a way such that the cost function \eqref{eq:problem1} is minimized.

Figure~\ref{fig:Eg1Phase1} displays typical sample paths $\{(x(t),v(t))\mid t\in[0,1]\}$ in phase space, as a function of time, that are attained using the optimal feedback strategy derived following \eqref{eq:optimalcontrol} and $Q=I$.
In all phase plots, the transparent blue ``tube'' represents the ``$3\,\sigma$'' tolerance interval. More specifically, the intersection ellipsoid between the tube and the slice plane $t$ is the set
    \[
        \left[\begin{matrix}x &v\end{matrix}\right]\Sigma(t)^{-1}\left[\begin{matrix}x\\v\end{matrix}\right]\le 3^2.
    \]
The feedback gains $K(t)=[k_1(t),\,k_2(t)]$ are shown in Figure \ref{fig:Eg1Controlfeedback} as a function of time.
Figure \ref{fig:Eg1Control1} shows the corresponding control action for each trajectory.

For comparison, Figure \ref{fig:Eg1Phase2} and Figure \ref{fig:Eg1Phase3} display typical sample paths under optimal control strategies when $Q=10I$ and $Q=-5I$ respectively. As expected, $\Sigma(\cdot)$ shrinks faster as we increase the state penalty $Q$ which is consistent with the reference evolution loosing probability mass at a higher rate at places where $x'Qx$ is large, while $\Sigma(\cdot)$ will expand first when $Q$ is negative since the particles have the tendency to stay away from the origin to reduce the cost.

To see the zero-noise limit behavior of the problems, we take different levels of noise intensity with the same $Q=I$. Figure \ref{fig:Eg2Phase1} and Figure \ref{fig:Eg2Phase2} depict the typical sample path for $\epsilon=10$ and $\epsilon=0.1$ respectively. As can be observed, the results converge to that of Problem \ref{eq:OMT}, which is shown in Figure \ref{fig:Eg2Phase3}.

\section{Conclusion}
The general theme of the work that was presented in Parts I, II, \cite{CheGeoPav14a,CheGeoPav14b} as well as in the present one, Part III, is the control of linear stochastic dynamical systems between specified distributions of their state vectors. This type of a problem represents a ``soft conditioning'' of terminal constraints that typically arise in LQG theory. It can also be seen as a precise variant of the rather indirect, and certainly less accurate, route to approximately regulate the distribution of the terminal state in LQG designs via a suitable choice of quadratic penalties. Although the development is reminiscent of classical LQG theory, in each case we studied, the key problem leads to an atypical two-point boundary value problem involving a pair of matrix Riccati equations nonlinearly coupled through their boundary conditions. 

The earlier works \cite{CheGeoPav14a,CheGeoPav14b} dealt with the case where a quadratic cost penalty is imposed on the input vector alone and, respectively, where stochastic excitation and control affect the system through the same or different channels. There is a substantial difference between the two that necessitated separate treatments. The present work, Part III, details the technical issues that arise when also a quadratic cost on the state vector is present. It is important to point out that herein we assume that noise and control input enter into the system via the same channel, i.e., same ``$B$'' matrix, very much as in the model taken in \cite{CheGeoPav14a}. The case where this is not so is currently open.

We note that the control problems to steer a stochastic linear system between terminal distributions, for the case where stochastic excitation and control input enter in the same manner, admit a Bayesian-like interpretation in that the law of the controlled system is the closest in the relative entropy sense to that of the uncontrolled system (``prior''); the presence of a state-penalty is related to creation/killing in the sense of Feynman-Kac \cite{Wak90,DaiPav90} of the uncontrolled evolution and was discussed in \cite{CheGeoPav14c}. Such an interpretation fails when the respective ``$B$''-matrices differ (as in the model in \cite{CheGeoPav14b}) because in this case the relative entropy between the two laws is infinite.

Another fruitful direction is the one taken in \cite{HalWen16} where a further relaxation of the terminal constraints was cast as a penalty on the Wasserstein distance between the terminal distribution and a pre-specified target distribution. The work in \cite{HalWen16} provides necessary conditions while no probabilistic/Bayesian interpretation of this formulation is available at present. Recent related contributions include \cite{vladimirov2015state}, where a discrete counterpart of SBP is being considered, and
\cite{bakolas2016optimal}, where the author brings in integral quadratic constraints into the corresponding covariance control problem at hand.

In all cases considered, a natural by-product is the theory to control linear {\em deterministic} systems, i.e., without stochastic excitation, between uncertain marginals for their state vectors. The underlying problem is again one of stochastic control by virtue of the random boundary state distributions. Most importantly, it represents a variant of optimal mass transport where the ``particles'' to be transported from an initial distribution to a final one obey non-trivial dynamics. Thus, the results in the present paper provide yet another generalization of optimal mass transport where the transportation cost derives from an  action functional with quadratic Lagrangian not satisfying the usual strict convexity assumption in the $\dot{x}$ variable (see \cite{CheGeoPav14e}).

\spacingset{1}

\bibliographystyle{IEEEtran}
\bibliography{refs}
\begin{figure}\begin{center}
\includegraphics[width=0.47\textwidth]{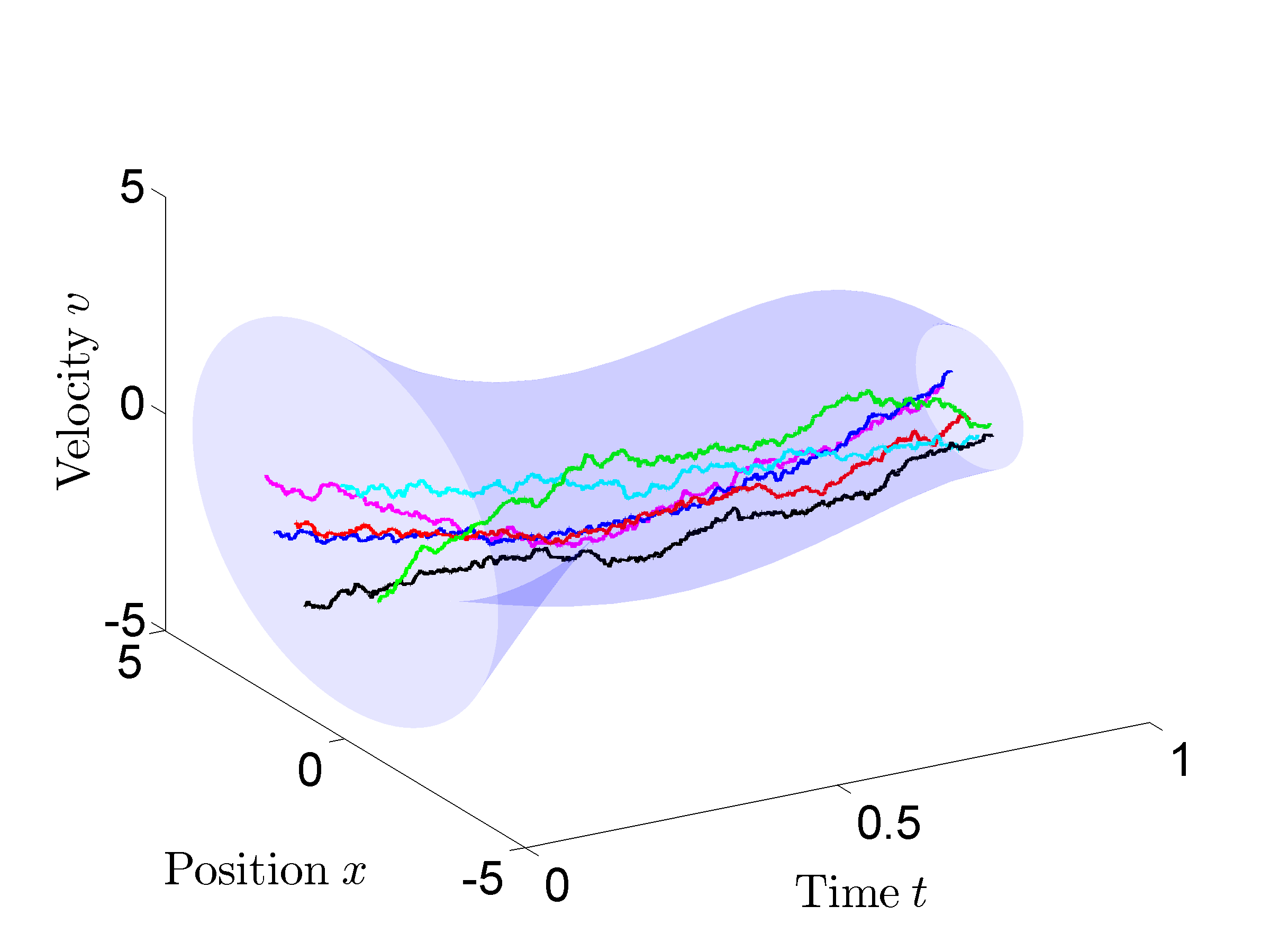}
   \caption{Inertial particles: state trajectories}
   \label{fig:Eg1Phase1}
\end{center}\end{figure}

\begin{figure}\begin{center}
\includegraphics[width=0.47\textwidth]{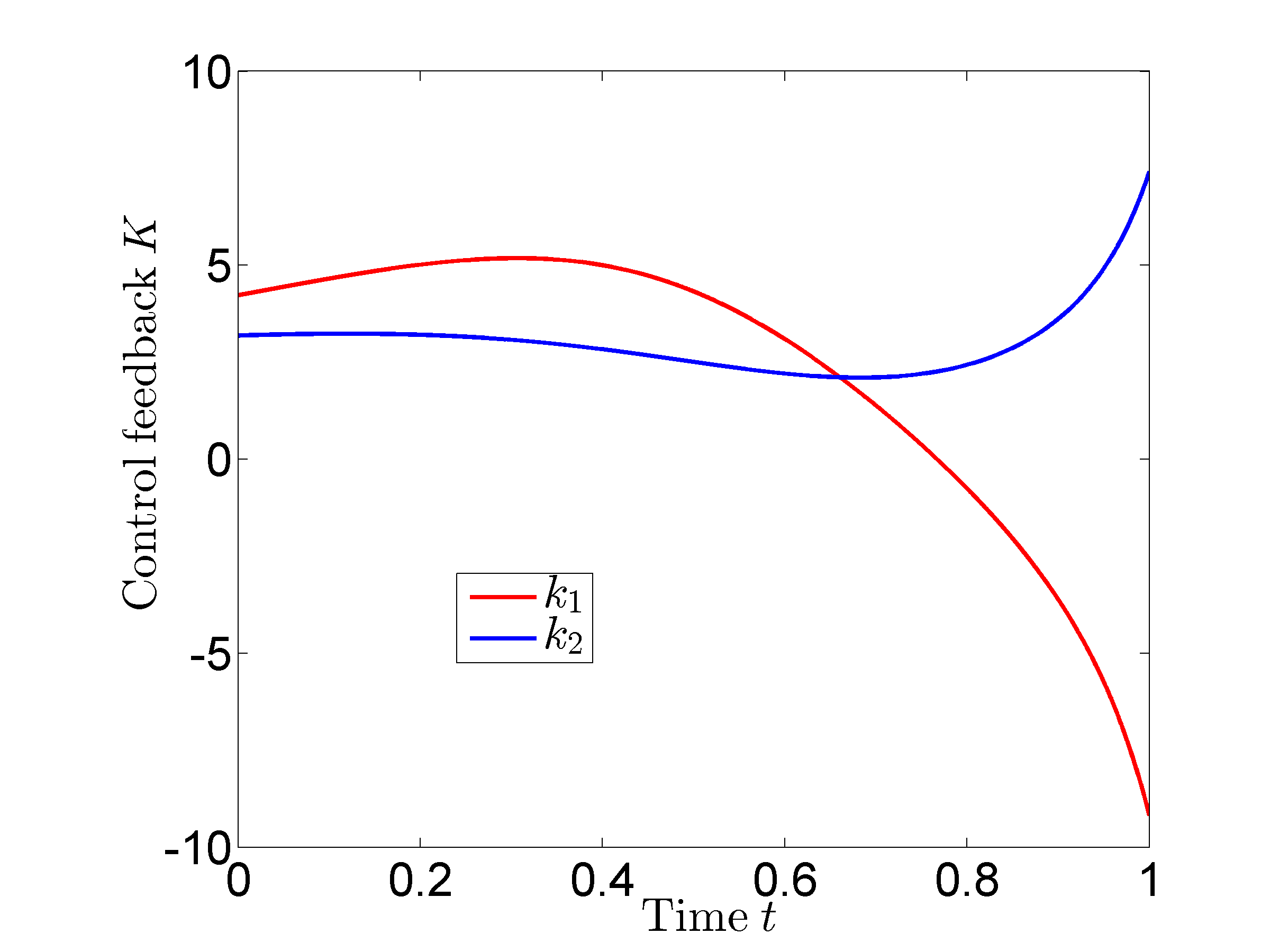}
   \caption{Inertial particles: feedback gains}
   \label{fig:Eg1Controlfeedback}
\end{center}\end{figure}

\begin{figure}\begin{center}
\includegraphics[width=0.47\textwidth]{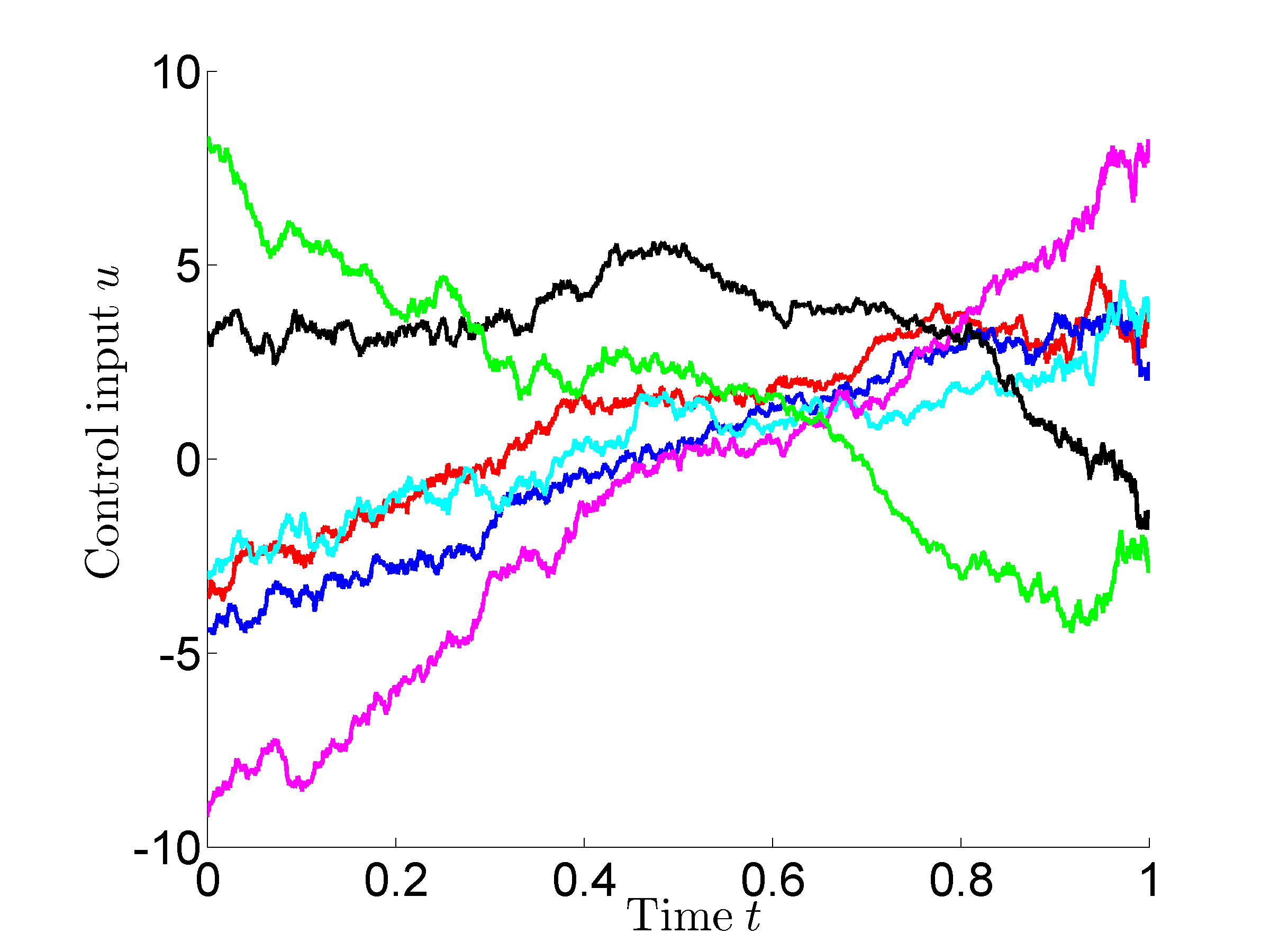}
   \caption{Inertial particles: control inputs}
   \label{fig:Eg1Control1}
\end{center}\end{figure}
\begin{figure}\begin{center}
\includegraphics[width=0.47\textwidth]{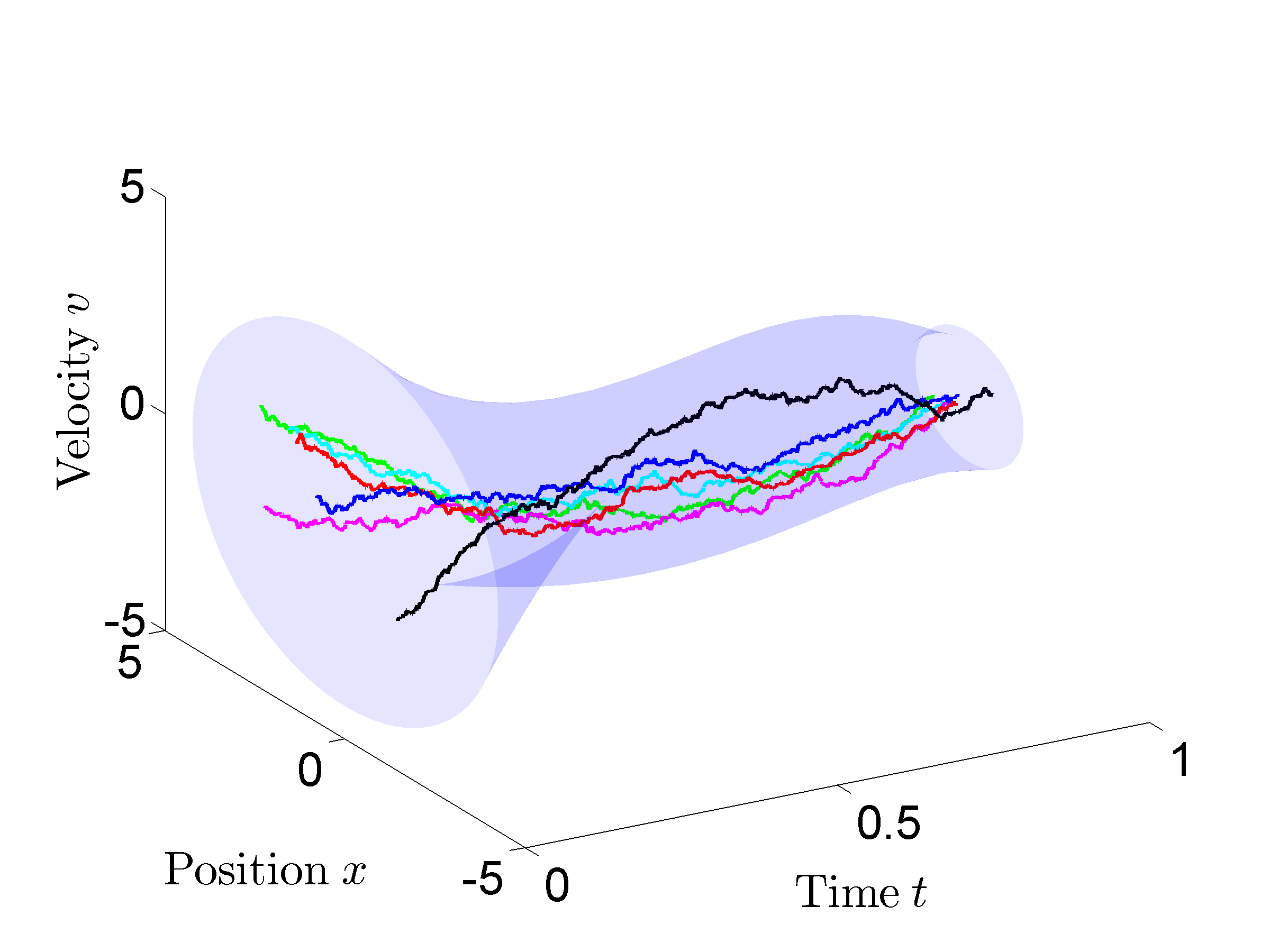}
   \caption{Inertial particles: state trajectories}
   \label{fig:Eg1Phase2}
\end{center}\end{figure}
\begin{figure}\begin{center}
\includegraphics[width=0.47\textwidth]{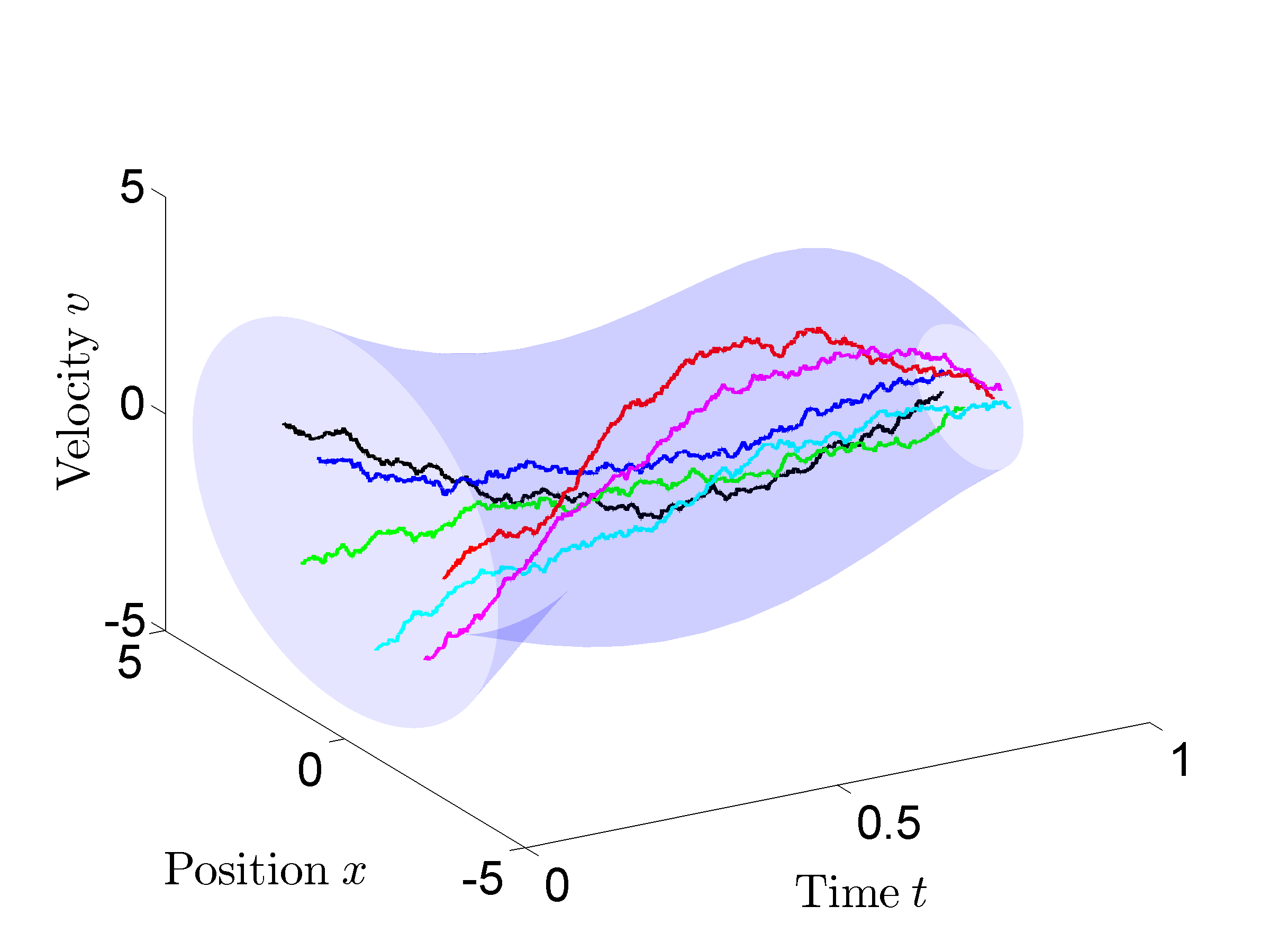}
   \caption{Inertial particles: state trajectories}
   \label{fig:Eg1Phase3}
\end{center}\end{figure}

\begin{figure}\begin{center}
\includegraphics[width=0.47\textwidth]{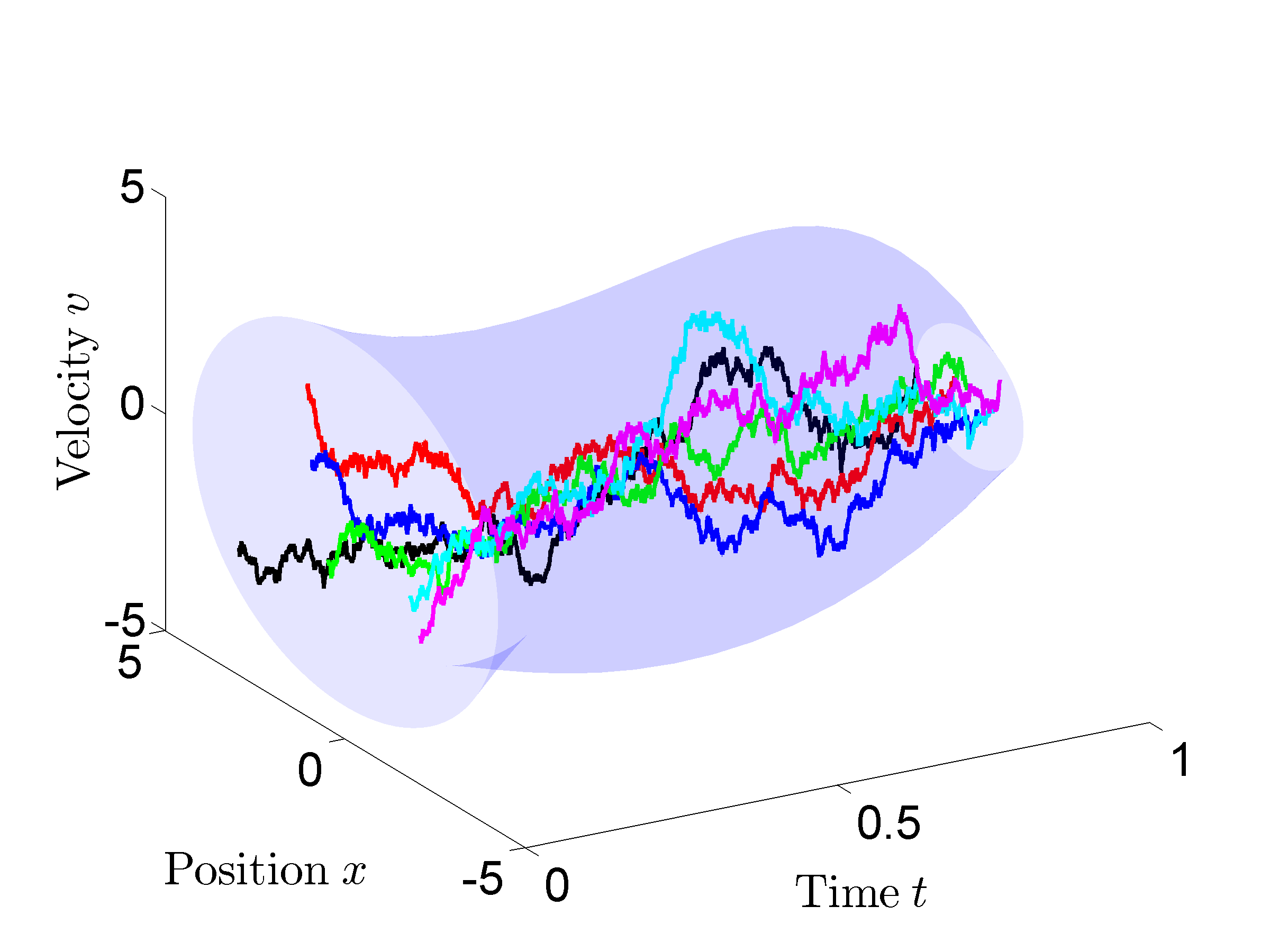}
   \caption{Inertial particles: state trajectories}
   \label{fig:Eg2Phase1}
\end{center}\end{figure}
\begin{figure}\begin{center}
\includegraphics[width=0.47\textwidth]{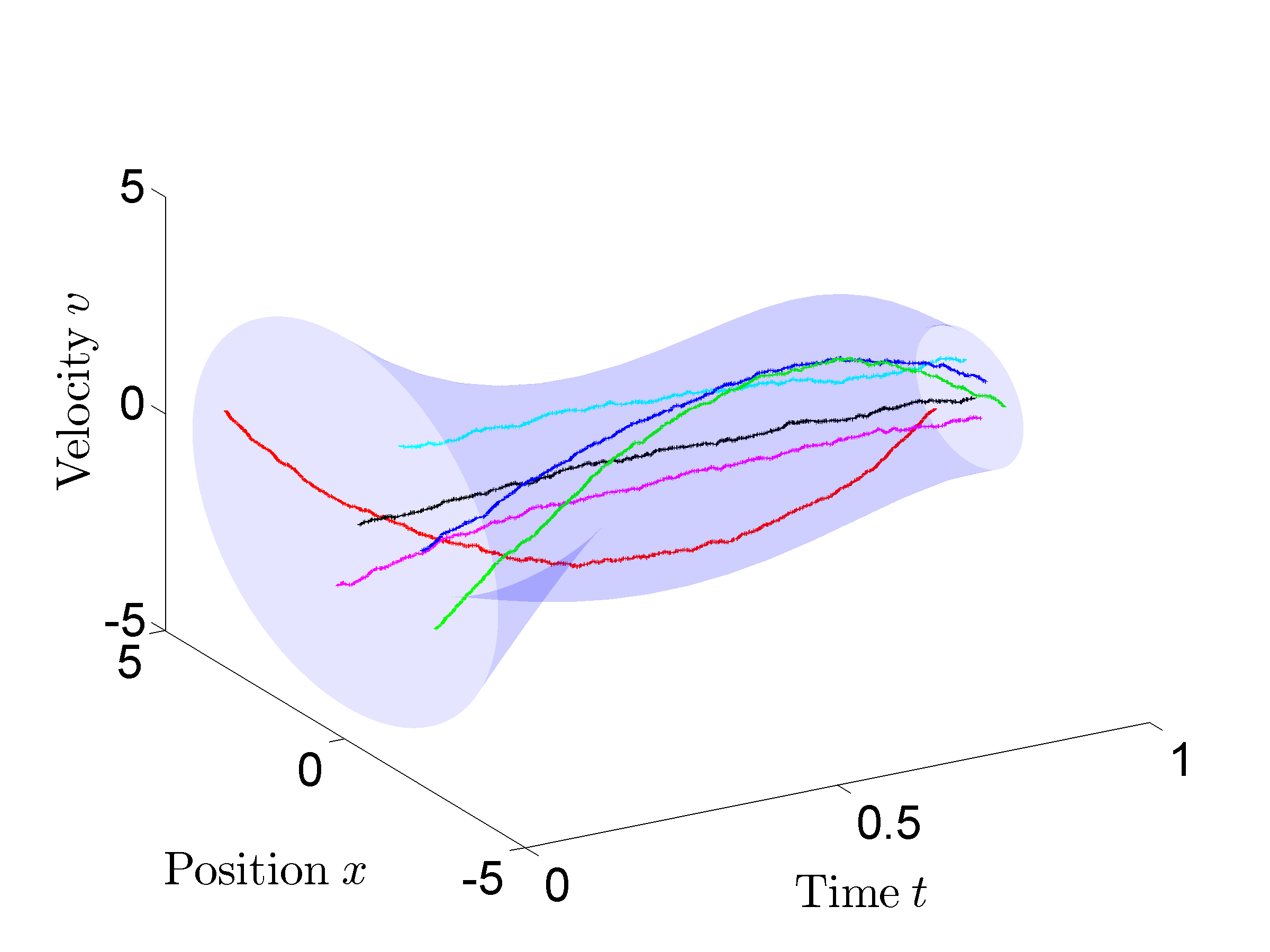}
   \caption{Inertial particles: state trajectories}
   \label{fig:Eg2Phase2}
\end{center}\end{figure}
\begin{figure}\begin{center}
\includegraphics[width=0.47\textwidth]{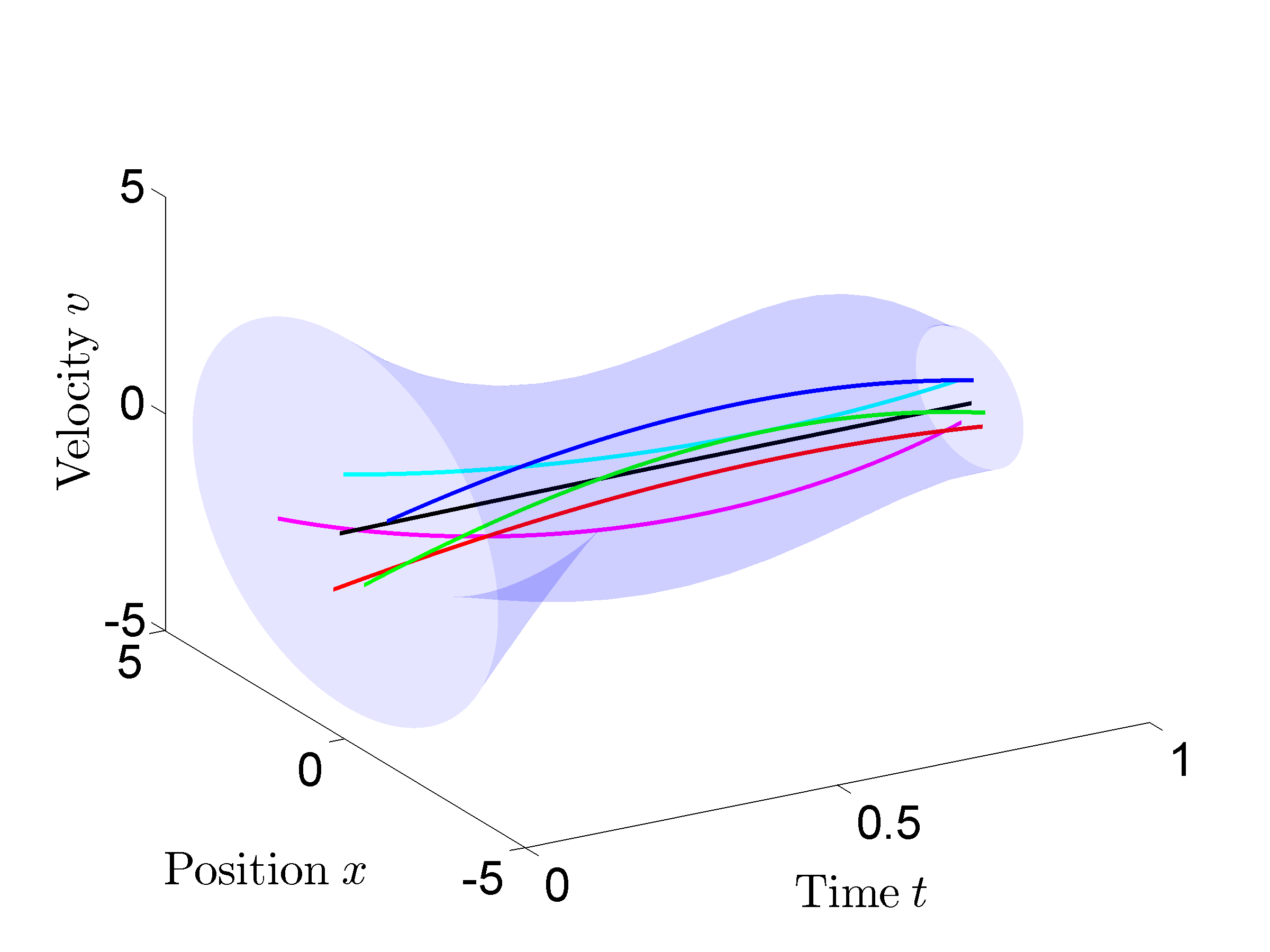}
   \caption{Inertial particles: state trajectories}
   \label{fig:Eg2Phase3}
\end{center}\end{figure}

\end{document}